\newtheorem{theorem}{Theorem}
\newtheorem{definition}{Definition}
\newtheorem{lemma}{Lemma}
\begin{document}
% The file aaai.sty is the style file for AAAI Press 
% proceedings, working notes, and technical reports.
%
\title{High Dimensional Clustering with $r$-nets}

\author{Georgia Avarikioti, Alain Ryser, Yuyi Wang, Roger Wattenhofer\\ETH Zurich, Switzerland\\\{zetavar,aryser,yuwang,wattenhofer\}@ethz.ch}
% \author{Alain Ryser}{ETH Zurich, Switzerland}{aryser@student.ethz.ch}{}{}%mandatory, please use full name; only 1 author per \author macro; first two parameters are mandatory, other parameters can be empty.
% \author{Yuyi Wang}{ETH Zurich, Switzerland}{yuwang@ethz.ch}{}{}
% \author{Roger Wattenhofer}{ETH Zurich, Switzerland}{wattenhofer@ethz.ch}{}{}
\maketitle
\begin{abstract}

Clustering, a fundamental task in data science and machine learning, groups a set of objects in such a way that objects in the same cluster are closer to each other than to those in other clusters. 
In this paper, we consider a well-known structure, so-called $r$-nets, which rigorously captures the properties of clustering. 
We devise algorithms that improve the run-time of approximating $r$-nets in high-dimensional spaces with $\ell_1$ and $\ell_2$ metrics from 
$\tilde{O}(dn^{2-\Theta(\sqrt{\epsilon})})$ to $\tilde{O}(dn + n^{2-\alpha})$, where $\alpha = \Omega({\epsilon^{1/3}}/{\log(1/\epsilon)})$. 
These algorithms are also used to improve a framework that provides approximate solutions to other high dimensional distance problems. Using this framework, several important related problems can also be solved efficiently, e.g., $(1+\epsilon)$-approximate $k$th-nearest neighbor distance, $(4+\epsilon)$-approximate Min-Max clustering, $(4+\epsilon)$-approximate $k$-center clustering. In addition, we build an algorithm that $(1+\epsilon)$-approximates greedy permutations in time $\tilde{O}((dn + n^{2-\alpha}) \cdot \log{\Phi})$ where $\Phi$ is the spread of the input. This algorithm is used to $(2+\epsilon)$-approximate $k$-center with the same time complexity.
\end{abstract}
%
%-------------INTRO---------------------------------------------------------------
%
%\nb{change the title? roger: yes, I would at least add ``with (or using) $r$-nets'', maybe instead of ``and Applications'' (since they are not really applications).} 

\section{Introduction}
Clustering aims at grouping together similar objects, where each object is often represented as a point in a high dimensional space. Clustering is considered to be a cornerstone problem in data science and machine learning, and as a consequence there exist multiple clustering variants. %For instance, the input data may be represented by relationships between any two points (external representation) or by having a vector describing each individual point (internal representation).
%\nb{roger: do we really need this differentiation internal/external in the first paragraph? i find it somewhat distracting... if we keep the sentence, then we should at least also introduce the terms Hamming and Euclidean space also here, since those are the ones that are used.... if we keep this sentence, the next sentence should be massaged as well.} 
For instance, while each cluster may just be represented as a set of points, it is often advantageous to select one point of the data set as a \emph{prototype} for each cluster.

%In particular, when the data points are high-dimensional, theoretical analysis becomes more challenging. 

A significant formal representation of such a prototype clustering is known as \emph{$r$-nets}.
Given a large set of $n$ data points in $d$-dimensional space, 
% talk about the construction of approximate $r$-nets and their applications. For high-dimensional data points, a
an $r$-net is a subset (the prototypes) of these data points. 
This subset needs to fulfill two properties: First, balls of radius $r$ around each of the prototypes need to contain every point of the whole data set (covering). Second, we must ensure that the prototypes are well separated, i.e., no ball contains the center of any other ball (packing). \emph{Approximate $r$-nets} lift the covering constraint a tiny bit by allowing  balls to have a slightly larger radius than $r$, while preserving the packing property, i.e., any two prototypes still need to have at least distance $r$. 

Throughout this paper, we assume data sets to be large and high dimensional. We therefore assume the number of features $d$ of each object to be non-constant. This leads to interesting and important problems, as this assumption forces us to think about algorithms whose runtime is sub-exponential (preferably linear) in the number of features $d$. In addition, we want our runtime to be sub-quadratic in the size $n$ of our data. In this paper we lay theoretical groundwork, by showing improved algorithms on the approximate $r$-net problem and applications thereof. 

\subsection{Related Work}
There is not a unique best clustering criterion,
% Interesting results depend on the way to measure similarities. 
% In particular, one trades off two  objectives: a high similarity among members of the same cluster and a high dissimilarity among members of different clusters. 
hence many methods \cite{estivill2002so} are proposed to solve the clustering problem for different applications (e.g., \cite{sibson1973slink,defays1977efficient,lloyd1982least,kriegel2011density}), which makes it difficult to systematically analyze clustering algorithms. 

In our paper we will make use of so-called polynomial threshold functions (PTF), 
%The method of probabilistic polynomial threshold functions (PTF)%\nb{@alain: we even did not mention this term before, and we need to write some sentence to start this subsection, not too sudden}
a powerful tool developed by \cite{alman2016polynomial}. PTFs are distributions of polynomials that can efficiently evaluate certain types of Boolean formulas with some probability. They are mainly used to solve problems in circuit theory, but were also used to develop new algorithms for other problems such as approximate all nearest neighbors or approximate minimum spanning tree in Hamming, $\ell_1$ and $\ell_2$ spaces. In the following, we employ this method to develop an algorithm that computes approximate $r$-nets. 

%\nb{roger: why all these ``double-backslash''? it looks ugly, and I would rather have proper paragraphs instead.}
The algorithmic framework Net \& Prune, was developed by \cite{PR14}. It is able to solve so called nice distance problems, when provided with a suitable data structure for the problem. These data structures are often constructed by exploiting $r$-nets. A major drawback of the framework is its restriction to a constant number of features. Consequentially, this framework was later extended by \cite{AEKP} to also solve higher dimensional cases. The algorithm, constructed in this paper, yields an immediate improvement on this framework, as the construction of the framework is based around approximate $r$-nets. We also present various of the previously mentioned data structures that we plug into the framework to solve high dimensional distance optimization problems.%\\

Recent work by \cite{EHS15} suggests a way of constructing approximate greedy permutations with approximate $r$-nets. Greedy permutations imply an ordering of the data, which provide a solution to $2$-approximate $k$-center clustering as shown by \cite{Gon85}. We present a similar construction, by applying approximate greedy permutations.

An approach on hierarchical clustering was presented by \cite{Dasgupta_2002_Hierarchical_clustering}. They construct an algorithm based on furthest first traversal, which is essentially building a greedy permutation and then traversing the permutation in order.

In \cite{fern_2003_random_projections} they present how random projections, in practice, can be applied to reduce the dimension of given data. We later employ a similar approach, namely random projections to lines, to reduce the approximate $r$-net problem with $\ell_1$ metrics to a low-dimensional subspace.
%\nb{add related work to ai or ml! Y: maybe k-means and density-based method ... roger: we should add some references to the first paragraph of related work, which talks about AI. zeta: Yuyi you are the only one who can do this st this point!}

\subsection{Our Contribution}
This paper presents new theoretical results on the construction of $(1+\epsilon)$-approximate $r$-nets, improving the previous upper bound of $\tilde{O}(dn^{2-\Theta(\sqrt{\epsilon})})$ by \cite{AEKP}. 
We denote $n$ as the number of data points, $d$ the dimension of the data and $\alpha = \Omega(\epsilon^{\frac{1}{3}}/\log(\frac{1}{\epsilon}))$ for an arbitrary error parameter $\epsilon$.
The algorithm builds approximate $r$-nets in Hamming, $\ell_1$ and $\ell_2$ spaces, running in $\tilde{O}(n^{2-\alpha}+n^{1.7+\alpha}d)$\footnote{The $\tilde{O}$ notation throughout the paper hides logarithmic factors in $n$ and polynomial terms in $\frac{1}{\epsilon}$} time in both Hamming and Euclidean space.  

We also modify our algorithm to yield an improvement on the Net \& Prune framework of \cite{AEKP}. 
Supplying the framework with certain data structures, which are created using approximate $r$-nets, we derive new algorithms with improved runtime on $(1+\epsilon)$-approximate $k$-smallest nearest neighbor distance, $(4+\epsilon)$-approximate Min-Max Clustering, introduced in \cite{PR14}, and $(4+\epsilon)$-approximate $k$-center clustering. 
These algorithms run in $\tilde{O}(dn+n^{2-\alpha})$ for data sets in $\ell_1$ or $\ell_2$ spaces. 
With the exception of approximate $k$-smallest nearest neighbor, this is, to our knowledge, the first time this framework is used to solve these problems in high dimensional spaces. 
We later also design a new algorithm to $(2+\epsilon)$-approximate $k$-center clustering, by deriving an improved version of the algorithm for $(1+\epsilon)$-approximate greedy permutations in \cite{EHS15}. 
Both of these algorithms have a runtime of $\tilde{O}((dn+ n^{2-\alpha})\log{\Phi})$, where $\Phi$ denotes the spread of the data. We define the spread of a dataset as the fraction of the diameter over the shortest distance of the graph. 
%\nb{roger: where spread... zeta: what do you want to say here? roger: well, there are multiple definitions of spread, and we need to give our definition. certainly here, maybe even in the abstract ok!}

The omitted proofs can be found in the appendix.
 
\section{Approximate $r$-nets}

% Let us first formally introduce the notion of an approximate $r$-net.
% \begin{definition}\label{def r-net}
% Given a point set $X\subset \mathbb{R}^d$, a distance parameter $r\in \mathbb{R}$ and an approximation parameter $\epsilon > 0$, an $r$-net with multiplicative/additive error $\epsilon$ of $X$ is as subset $N\subset X$ such that the following properties hold:
% \begin{enumerate}
% \item (packing) For every $p,q \in N$, $p \neq q$, we have that $\| p- q\|_2 \geq r$.
% \item (covering) For every $p \in X$, there exists a $q\in$ N, s. t. 
% \begin{equation*}
% \begin{split}
% \|p - q\| &\leq (1+\epsilon)r \qquad \text{(multiplicative error)}
% \end{split}
% \end{equation*}
% \end{enumerate}
% \end{definition}
% In the following we give an improved on the time bound, used to solve approximate $r$-nets in $\ell_1$ and $\ell_2$ spaces, over previously found algorithms such as in \cite{AEKP}. To achieve this, we use techniques such as locality sensitive hashing, the Johnson Lindenstrauss Lemma or probabilistic Polynomial Thresholding Functions \cite{alman2016polynomial}.
% \newline
% As solving the approximate $r$-net in $\ell_1$ or $\ell_2$ spaces seems to be hard, we will first give a solution for points that are located on the vertices of the $d$-dimensional unit hypercube with $\ell_1$ norm, thus for vectors that lay in Hamming space with the Hamming distance as distance metric. We will then later see a technique that makes use of this construction to ultimately solve the problem in the more general $\ell_1$ and $\ell_2$ spaces under their corresponding norms.

In this section, we present an algorithm that builds approximate $r$-nets in $\ell_1$ and $\ell_2$ spaces. To that end, we first derive an algorithm, that constructs approximate $r$-nets in Hamming space. We later show how to reduce the problem from $\ell_1$ or $\ell_2$ to Hamming space.%and thus solve the general problem.

%-------------------------------------------------------------------
%------------------Approximate $r$-net in Hamming space--------------------
%-------------------------------------------------------------------
\subsection{Approximate $r$-net in Hamming Space}
Building approximate $r$-nets in Euclidean space is computationally expensive. Therefore, we initially restrict ourselves to datapoints on the vertices of a high dimensional hypercube. The distance between any two datapoints is then measured by the Hamming distance. In the following, we define the notion of approximate $r$-nets in this metric space, where the error is additive instead of multiplicative. 

\begin{definition}\label{def r-net}
Given a point set $X\subset \{0,1\}^d$, a radius
$r\in \mathbb{R}$, an approximation parameter $\epsilon > 0$ and the Hamming distance denoted as $\|\cdot\|_1$, an approximate $r$-net of $X$ with additive error $\epsilon$ is as subset $C\subset X$ such that the following properties hold:
\begin{enumerate}
\item (packing) For every $p,q \in C$, $p \neq q$, it holds that 
\begin{equation*}
\begin{split}
\| p- q\|_1\geq r
\end{split}
\end{equation*}

\item (covering) For every $p \in X$, there exists a $q\in C$, s. t. 
\begin{equation*}
\begin{split}
\|p - q\|_1 &\leq r + \epsilon d \qquad \text{(additive error)}
\end{split}
\end{equation*}
\end{enumerate}
\end{definition}
To construct approximate $r$-net we employ Probabilistic Polynomial Threshold Functions, a tool introduced in \cite{alman2016polynomial}. To effectively apply this technique, we require a sparse dataset, meaning that we assume that most of the points are further than $r$ from each other. To that end, we present a technique that sparsifies the data in advance without losing meaningful data for our problem.

%--------Sparsifiaction-----------
\subsubsection{Sparsification}
To sparsify our data, we apply brute force to build part of the approximate $r$-net. Intuitively, we randomly pick a center point from our dataset and then remove every point that is closer then $ r + \epsilon d $ from the center, by checking every point of the dataset. 
This technique was originally introduced in \cite{AEKP}. The proof of Theorem \ref{brute force} closely follows this work. 

\begin{restatable}[]{theorem}{bruteforce}\label{brute force}
Given $X \subset\{0,1\}^d$, $|X| = n$, $\epsilon > 0$, the Hamming distance which we denote as $\|\cdot\|_1$ and some distance $r\in\mathbb{R}$, we can compute a set $X'\subset X$ with 
\begin{equation*}
\begin{split}
Pr[Y\leq n^{1.7}]\geq 1-n^{-0.2}
\end{split}
\end{equation*}
and a partial r-net $C$ of $X\setminus X'$, where \\
$Y:=|\{\{i,j\}|x_i,x_j\in X',\|x_i-x_j\|_1\leq r+\epsilon d\}|$ the number of points with close distance to each other, in time $O(dn^{1.5})$. 
\end{restatable}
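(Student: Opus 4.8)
The plan is to analyze the following randomized sparsification procedure: repeatedly sample a uniformly random point $c$ from the current point set, add $c$ to the partial net $C$, and delete from the current set every point within distance $r + \epsilon d$ of $c$ (including $c$ itself); stop once the number of close pairs remaining is at most $n^{1.7}$, or once the set is empty. Each round costs $O(dn)$ time for the distance scan, so it suffices to argue that with probability at least $1 - n^{-0.2}$ the procedure halts within $O(\sqrt{n})$ rounds, giving total time $O(dn^{1.5})$. The output $C$ is automatically a valid partial $r$-net of $X \setminus X'$: any two centers are separated by more than $r + \epsilon d \ge r$ (packing), since when the second was chosen the first had already deleted everything within $r+\epsilon d$, and every deleted point is within $r + \epsilon d$ of the center that removed it (covering with additive error $\epsilon$).

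The heart of the argument is the round bound. First I would set up the potential $Y_t$ = number of unordered close pairs (pairs at Hamming distance $\le r + \epsilon d$) among the points surviving after round $t$, with $Y_0 \le \binom{n}{2}$. The key observation is that if at the start of a round there are $m$ surviving points forming $Y$ close pairs with $Y > n^{1.7}$, then the average point is incident to $2Y/m > 2n^{1.7}/m \ge 2n^{0.7}$ close pairs (using $m \le n$); hence a uniformly random center, in expectation, destroys at least $n^{0.7}$ close pairs in that round — both the ones incident to it and, more importantly, the ones incident to the points it deletes. So in expectation each round reduces $Y$ by at least $n^{0.7}$ as long as we are above the threshold, and after at most $\binom{n}{2}/n^{0.7} = O(n^{1.3})$ rounds in expectation we drop to $Y \le n^{1.7}$. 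That alone is too weak; I would instead track the random variable $R$ = number of rounds until $Y \le n^{1.7}$ and show a high-probability bound.

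To get the $O(\sqrt n)$ bound with the stated failure probability, I would argue more carefully about how many points get deleted per round rather than how many pairs. The cleanest route: as long as $Y > n^{1.7}$, there exists a point of degree (in the "close pair" graph on survivors) at least $2Y/m \ge 2n^{1.7}/n = 2n^{0.7}$; a uniformly random center has, in expectation, degree $\ge 2n^{0.7}$, and deleting it plus its $\ge 2n^{0.7}$ neighbors removes those vertices and all $Y$-edges incident to them. One then shows that, say, with constant probability a round removes $\Omega(n^{0.7})$ edges, and chains $O(\sqrt n)$ such rounds via a Chernoff/union bound so that the total number of edges removed exceeds $\binom n2$ except with probability $\le n^{-0.2}$; a clean way is to bound $\Pr[R > cn^{0.5}]$ by noting the process is dominated by a sum of independent Bernoulli-scaled decrements and applying a tail inequality calibrated to hit $n^{-0.2}$. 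Since this is exactly the computation carried out in \cite{AEKP}, I would follow their concentration analysis verbatim, merely re-deriving the constants for the $n^{1.7}$ / $n^{-0.2}$ / $dn^{1.5}$ bookkeeping. The main obstacle, and the only genuinely delicate point, is the concentration step: expectations give the round count in expectation easily, but converting "expected decrement per round is large" into "with probability $1 - n^{-0.2}$ only $O(\sqrt n)$ rounds occur" requires handling the adaptivity of the deletions (each round's distribution depends on the previous survivors) — this is where one must be careful to set up a martingale or a dominating independent process rather than naively multiplying per-round probabilities.
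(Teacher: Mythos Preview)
Your procedure and the paper's differ in a way that makes your analysis much harder than necessary. The paper does \emph{not} run adaptively until $Y\le n^{1.7}$; it simply runs the ``pick a random point, delete its $(r+\epsilon d)$-neighborhood'' step exactly $\sqrt{n}$ times and then stops. The $O(dn^{1.5})$ time bound is therefore immediate and does not require bounding the number of rounds probabilistically (nor does it require computing or maintaining $Y$ during the run, which your version would need).

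For the probability bound, the paper uses no concentration or martingale machinery at all. It argues by a two-case split on the average degree $\mathbb{E}[A_i]$ of the close-pair graph at each round: if in every one of the $\sqrt{n}$ rounds the average degree exceeds $2\sqrt{n}$, the expected deletions sum to more than $n$ and the set empties; otherwise there is a round where the average degree is at most $2\sqrt{n}$, so at that moment (and hence at the end, since edges only vanish) the number of close pairs is at most $O(n\cdot\sqrt{n})=O(n^{1.5})$. A single application of Markov's inequality, $\Pr[Y\ge n^{1.7}]\le O(n^{1.5})/n^{1.7}=O(n^{-0.2})$, then gives the claimed bound. Your plan to ``follow \cite{AEKP}'s concentration analysis verbatim'' is based on a misreading: that source (reproduced in the present paper) uses exactly this Markov argument, not concentration.

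The substantive gap in your proposal is precisely the step you flag as delicate. Your first pass yields only an expected $O(n^{1.3})$ rounds, which is fatal for the time bound; your second pass (track vertex deletions) is not carried through, and converting the expected per-round deletion into a $1-n^{-0.2}$ round bound under adaptive dependence is genuinely nontrivial. All of this is avoided by fixing the round count to $\sqrt{n}$ up front and letting Markov absorb the slack between $n^{1.5}$ and $n^{1.7}$.
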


%--------Distance Matrix-----------
\subsubsection{Distance Matrix}
Next we introduce a tool, called \emph{distance matrix}, to approximate $r$-nets. 
To construct a distance matrix, we partition the dataset into disjoint sets of equal size.
The rows of the matrix correspond to partitions and the columns to points of the dataset. Each entry holds a value which indicates if any of the points in a partition (row) is at most $r+\epsilon d$ close to a data point (column). We use Probabilistic Polynomial Threshold Functions, formally defined below, to construct a matrix with such indicator values. 
% The naive way to use such a matrix would be to check the $j$-th point with every point of the $i$-th partition, for every entry $(i,j)$. This would result in a worst-case running time of $O(n^2)$. Thus, we present an alternative approach and improve the time complexity using Probabilistic Polynomial Threshold Functions. 

\begin{definition}[\cite{alman2016polynomial}]\label{def prob. polynomial}
If $f$ is a Boolean function on $n$ variables, and $R$ is a ring, a \textit{probabilistic polynomial for $f$ with error $\frac{1}{s}$ and degree $d$} is a distribution $\mathcal{D}$ of degree-$d$ polynomials over $R$ such that $\{0,1\}^n, Pr_{p\sim\mathcal{D}}[p(x) = f(x)]\geq 1-\frac{1}{s}$.
\end{definition}

The main building block to construct the distance matrix is Theorem \ref{prob. PTF for OR}, which uses the fact that each entry of the distance matrix can be expressed as a Boolean formula.

\begin{theorem}[\cite{alman2016polynomial}]\label{prob. PTF for OR}
Given $d,s,t,\epsilon$, we can construct a probabilistic polynomial $\tilde{P}:\{0,1\}^{ns}\rightarrow\mathbb{R}$ of degree at most $\Delta := O((\frac{1}{\epsilon})^{\frac{1}{3}}\log(s))$ with at most $s\cdot {n\choose\Delta}$, such that:
\begin{enumerate}
\item If $\bigvee_{i=1}^s[\sum_{j=1}^n x_{ij}\geq t]$ is false, then $|\tilde{P}(x_{11},...,x_{1n},...,x_{s1},...,x_{sn})| \leq s$ with probability at least $\frac{2}{3}$;
\item If $\bigvee_{i=1}^s[\sum_{j=1}^n x_{ij}\geq t + \epsilon n]$ is true, then $\tilde{P}(x_{11},...,x_{1n},...,x_{s1},...,x_{sn}) > 2s$ with probability at least $\frac{2}{3}$.
\end{enumerate}
\end{theorem}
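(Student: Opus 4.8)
This is a result of \cite{alman2016polynomial}, so the plan is to reconstruct their construction. It splits into a single-gate building block and a cheap way to take an OR of $s$ copies. \textbf{Single gapped threshold.} For a fixed block of $n$ variables $x_{i1},\dots,x_{in}$ I would build a probabilistic polynomial $p$ that, except with probability $\delta$, lands within $\eta$ of $0$ whenever $\sum_j x_{ij}<t$ and within $\eta$ of $1$ whenever $\sum_j x_{ij}\ge t+\epsilon n$ --- and, crucially, stays bounded on all of $\{0,\dots,n\}$, not just near the threshold. The obvious construction (truncate the Chebyshev approximation of a step function whose transition region has width $\epsilon n$) only gives degree $O(\sqrt{1/\epsilon}\cdot\mathrm{polylog})$, which is the bound we want to beat, so I would instead use the two-level sampling/partition construction of \cite{alman2016polynomial}: randomly subsample the coordinates down to $m=\mathrm{poly}(1/\epsilon)$ of them so the subsampled count tracks the true normalized count to within $\epsilon/2$ (Hoeffding); split the survivors into blocks; within each block use a very low-degree polynomial to get only an \emph{approximate} count (harmless, since the leftover gap gives slack and the per-block errors concentrate); then apply an outer low-degree threshold polynomial to the sum of the block estimates. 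Optimizing block size against block count is exactly what turns the exponent $1/2$ into $1/3$, yielding degree $O((1/\epsilon)^{1/3}\cdot\mathrm{polylog}(1/\delta))$.

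\textbf{OR of $s$ gates.} Apply the previous step once per block with $\delta=\eta=\Theta(1/s)$, producing $p_1,\dots,p_s$ of degree $\Delta_1=O((1/\epsilon)^{1/3}\log s)$ --- this is where the $\log s$ enters $\Delta$. I claim the plain rescaled sum already works: set $\tilde P:=6s\sum_{i=1}^s p_i$. By a union bound, with probability $\ge 2/3$ all the $p_i$ are simultaneously $\eta$-accurate; condition on that event. If $\bigvee_i[\sum_j x_{ij}\ge t]$ is false then every $p_i\in[-\eta,\eta]$, so $|\tilde P|\le 6s\cdot s\eta\le s$ once $\eta\le 1/(6s)$. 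If $\bigvee_i[\sum_j x_{ij}\ge t+\epsilon n]$ is true then some $p_{i_0}\ge 1-\eta$ while the rest are $\ge-\eta$, so $\tilde P\ge 6s\big((1-\eta)-(s-1)\eta\big)\ge 5s>2s$. (I would recheck this arithmetic and fine-tune the constant $6$ and the accuracy $\eta$.)

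\textbf{Degree and sparsity.} $\deg\tilde P=\Delta_1=O((1/\epsilon)^{1/3}\log s)=\Delta$, and since each $p_i$ depends only on the $n$ variables of its own block it is (after reducing $x^2\mapsto x$) multilinear of degree $\le\Delta$, hence has at most $\binom{n}{\le\Delta}$ monomials; summing over the $s$ blocks gives at most $s\binom{n}{\Delta}$ monomials up to lower-order factors, as claimed.

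\textbf{Main obstacle.} Everything in the OR step and the degree/monomial bookkeeping is routine; the genuine difficulty is the single-gate step --- specifically the concentration analysis of the two-level partition construction that certifies the $(1/\epsilon)^{1/3}$ exponent rather than the naive $\sqrt{1/\epsilon}$. That is the technical heart of \cite{alman2016polynomial}, and I would expect essentially all of the effort to go there.
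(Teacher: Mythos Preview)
The paper does not prove this theorem at all: it is quoted verbatim as a result of \cite{alman2016polynomial} and used as a black box (notably inside the proof of Theorem~\ref{distance matrix}). So there is no ``paper's own proof'' to compare against; your proposal is a reconstruction of the Alman--Williams argument, which is more than what the present paper attempts.

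That said, since you sketched the construction, one remark on the OR step. Your analysis of case~2 asserts that ``the rest are $\ge -\eta$,'' but the hypothesis $\bigvee_i[\sum_j x_{ij}\ge t+\epsilon n]$ only pins down one block $i_0$; the remaining blocks may have their sums in the gap $[t,\,t+\epsilon n)$, where you have only guaranteed that $p_i$ is \emph{bounded}, not that $p_i\ge -\eta$. With a plain rescaled sum $6s\sum_i p_i$ this could swamp the single large term. The usual fix in \cite{alman2016polynomial} is to build the single-gate polynomial asymmetrically from the start (small in absolute value below $t$, large and positive above $t+\epsilon n$, and \emph{nonnegative}---or at least bounded below by $-O(1/s)$---throughout, including the gap), rather than as a $\{0,1\}$-indicator approximation. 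Once you adjust the single-gate specification accordingly, the sum-then-threshold argument goes through and your degree and monomial counts are as stated.
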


Before we show how to construct the distance matrix for a given dataset, we cite the following Lemma by \cite{Cop}, on rectangular matrix multiplication.

\begin{lemma}[\cite{Cop}]\label{matmult copper}
For all sufficiently large $N$, and $\alpha \leq .172$, multiplication of an $N\times N^\alpha$ matrix with an $N^\alpha\times N$ matrix can be done in $N^2poly(\log{N})$ arithmetic operations, over any field with $O(2^{poly(log{N})})$ elements.\footnote{A proof can be found in the Appendix of \cite{Wil14}}
\end{lemma}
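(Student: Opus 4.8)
The plan is to recount Coppersmith's bilinear-algorithm argument, since the statement is quoted verbatim (and a full proof is referenced in the appendix of \cite{Wil14}); what follows is the route I would take to reconstruct it. First I would set up the algebraic machinery: multiplying an $m_1\times m_2$ matrix by an $m_2\times m_3$ matrix is the bilinear map usually written $\langle m_1,m_2,m_3\rangle$, its arithmetic complexity is controlled up to constant factors by its tensor rank (or border rank), and these ranks are submultiplicative under the Kronecker product, $\langle a,b,c\rangle\otimes\langle a',b',c'\rangle=\langle aa',bb',cc'\rangle$. Hence it suffices to produce one explicit ``seed'' bilinear algorithm for a small rectangular product and then amplify it by repeated tensoring together with a rebalancing of the three dimensions.

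Second, I would invoke Coppersmith's seed construction together with Sch\"onhage's asymptotic-sum ($\tau$-) theorem. Define $\omega(r)=\inf\{\tau:\langle n,n,\lceil n^{r}\rceil\rangle\text{ is computable with }O(n^{\tau})\text{ operations}\}$. Trivially $\omega(r)\ge 2$ and $\omega(0)=2$; Coppersmith's 1982 identity, analyzed via the $\tau$-theorem (which converts a good border-rank bound for a disjoint sum of rectangular products into a complexity bound for a single large product), pushes the region where $\omega(r)=2$ out to all $r\le r_0$ with $r_0\ge 0.17227$. Taking $r=0.172<r_0$ therefore already gives $\langle N,N^{0.172},N\rangle$ in $N^{2+o(1)}$ operations.

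Third — and this is the quantitatively delicate point — I would upgrade the $N^{o(1)}$ factor to $\mathrm{poly}(\log N)$ by using that $0.172$ lies strictly below the threshold $r_0$. That slack is exactly what lets one fix a tensor power and rebalancing in which the thin dimension equals $N^{0.172}$ while the recursion has depth only $O(\log N)$ and each of the $O(\log N)$ levels contributes only a bounded multiplicative overhead; multiplying these factors together yields $N^{2}\cdot\mathrm{poly}(\log N)$ operations, whereas pushing $r$ all the way to $r_0$ would in general force an $N^{o(1)}$ overhead. The hypothesis that the field has $O(2^{\mathrm{poly}(\log N)})$ elements enters only to guarantee that the interpolation points and scalars appearing in the bilinear identities exist and are invertible over the field, and that the incidental univariate polynomial arithmetic also fits within the same bound.

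The main obstacle is that this is not an elementary, self-contained argument: it rests on border rank, Sch\"onhage's $\tau$-theorem, and Coppersmith's specific 1982 construction and its rank analysis, so a genuinely complete proof would have to reproduce that construction in full. Accordingly, the appropriate ``proof'' here is the reduction sketched above plus the citations to \cite{Cop} and \cite{Wil14}; within that sketch, the single step that really needs care is verifying that the overhead incurred by balancing and by the $O(\log N)$ levels of recursion is only polylogarithmic, which is precisely what the strict inequality $0.172<r_0$ buys.
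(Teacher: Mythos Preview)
Your sketch is a faithful outline of Coppersmith's argument, but note that the paper itself supplies no proof of this lemma at all: it is stated as a cited result from \cite{Cop}, with the footnote pointing the reader to the appendix of \cite{Wil14} for a full proof. So there is nothing in the paper's own proof to compare against; the paper treats this purely as a black box.

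What you have written is therefore strictly more than the paper provides. Your three-step outline (bilinear rank and submultiplicativity, Coppersmith's seed construction analyzed via Sch\"onhage's $\tau$-theorem to get $\omega(r)=2$ for $r\le r_0$ with $r_0>0.172$, and then exploiting the strict gap $0.172<r_0$ to keep the recursion overhead polylogarithmic) is the standard route and matches the argument recounted in \cite{Wil14}. Your caveat that this is not self-contained and ultimately rests on Coppersmith's explicit construction is exactly right; that is why the paper simply cites it.
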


Next, we present how to build the distance matrix, combining fast matrix multiplication and Probabilistic Polynomial Threshold Functions.

\begin{restatable}[]{theorem}{distancematrix}\label{distance matrix}
%\begin{theorem}\label{distance matrix}
Let $X$ be a set of $n$ points in $\{0,1\}^d$, a radius $r \in\mathbb{R}$, some $\epsilon \gg \frac{\log^6(d\log{n})}{\log^3{n}}$, $\alpha = \Omega(\frac{\epsilon^{\frac{1}{3}}}{\log(\frac{d}{\epsilon\log{n}})})$ and let \\
$\|\cdot\|_1$ denote the Hamming distance. There exists an algorithm that computes, with high probability, a $n^{1-\alpha}\times n$ matrix $W$ and a partition $S_1, ..., S_{n^{1-\alpha}}$ of X that satisfies the following properties:
\begin{enumerate}
\item For all $i \in [n^{1-\alpha}]$\footnote{By $[k]$ we denote the set $\{1,2,...,k\}$} and $j\in[n]$, if $\min_{p \in S_i}\|x_j- p\|_1\leq r$ then $W_{i,j} > 2|S_i|$.
\item For all $k \in [n^{1-\alpha}]$ and $j\in[n]$, if $\min_{p \in S_i}\|x_j- p\|_1 > r + \epsilon d$, then $|W_{i,j}| \leq |S_i|$
\end{enumerate}
The algorithm runs in \~{O}$(n^{2-\alpha})$.
\end{restatable}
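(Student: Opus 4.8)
The plan is to build the matrix $W$ as a product of two matrices whose entries come from evaluating the probabilistic PTF of Theorem~\ref{prob. PTF for OR}, following the standard "polynomial method" recipe. First I would partition $X$ into $n^{1-\alpha}$ groups $S_1,\dots,S_{n^{1-\alpha}}$ each of size $s := n^{\alpha}$. The key observation is that, for a fixed group $S_i$ and a fixed query point $x_j$, the statement "$\min_{p\in S_i}\|x_j-p\|_1 \le r$" is exactly an OR over the $s$ points $p\in S_i$ of the threshold predicate "$p$ and $x_j$ agree on at least $d-r$ coordinates", i.e.\ "$\sum_{\ell} [\,p_\ell = (x_j)_\ell\,] \ge d-r$". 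This is precisely the Boolean form $\bigvee_{i=1}^s[\sum_j x_{ij}\ge t]$ handled by Theorem~\ref{prob. PTF for OR}, with $t = d-r$ and the inner variables being the coordinate-agreement bits, which are affine functions of the input bits.

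Next I would use the low-degree structure to turn the PTF evaluation into a matrix product. Since $\tilde P$ has degree $\Delta = O((1/\epsilon)^{1/3}\log s)$, expanding $\tilde P$ as a linear combination of monomials lets us write $\tilde P(\text{group }S_i,\text{ point }x_j) = \langle u_i, v_j\rangle$, where $u_i$ is a vector indexed by the $\le s\binom{(\text{vars})}{\Delta} = s\binom{sd}{\Delta}$ monomials built from the coordinates of the points in $S_i$, and $v_j$ is the corresponding vector of monomial values built from $x_j$. Then $W := U V$ with $U$ the $n^{1-\alpha}\times m$ matrix of the $u_i$ and $V$ the $m\times n$ matrix of the $v_j$, where $m$ is the number of monomials. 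The entry $W_{i,j}$ is then the PTF value, and the two guarantees of Theorem~\ref{prob. PTF for OR} (value $>2s$ when the OR at threshold $t$ is true, value $\le s$ when the OR at threshold $t+\epsilon d$ is false) translate directly into the two claimed properties, after a probability-amplification step (take the majority / sum over $O(\log n)$ independent copies of $\tilde P$) to push the $2/3$ success probability up to $1 - n^{-c}$ so that a union bound over all $n^{1-\alpha}\cdot n < n^2$ entries gives the "with high probability" conclusion. One must be a little careful that amplification preserves the clean separation between the "$\le s$" and "$>2s$" regimes — scaling constants so the gap is, say, $[\le |S_i|]$ versus $[>2|S_i|]$ as stated — but this is routine.

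The runtime analysis is where the parameter constraints on $\epsilon$ and $\alpha$ earn their keep. The number of monomials is $m \le s\binom{sd}{\Delta} = n^{\alpha}\cdot (n^{\alpha} d)^{O(\Delta)}$; the hypothesis $\epsilon \gg \log^6(d\log n)/\log^3 n$ together with $\Delta = O((1/\epsilon)^{1/3}\log s)$ and the choice $\alpha = \Theta(\epsilon^{1/3}/\log(d/(\epsilon\log n)))$ is exactly what makes $m \le n^{\alpha}\cdot n^{o(1)}$, i.e.\ $m = n^{\alpha + o(1)} \le n^{2\alpha} \le n^{0.172}$ for $n$ large and $\epsilon$ small, so that Lemma~\ref{matmult copper} applies: multiplying the $N\times N^{\alpha'}$ by $N^{\alpha'}\times N$ matrices with $N = n^{1-\alpha}$ (and columns regrouped appropriately so the "inner" dimension $m$ is $\le N^{0.172}$) costs $N^2\,\mathrm{poly}(\log N) = \tilde O(n^{2-2\alpha}) = \tilde O(n^{2-\alpha})$. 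Building $U$ and $V$ entrywise costs $\tilde O(n\cdot m + n^{1-\alpha} m) = \tilde O(n^{2-\alpha})$ as well, and I would also need to check the field-size requirement of Lemma~\ref{matmult copper} is met — the polynomial values are integers of magnitude $\mathrm{poly}(s)$ times the $2^{\mathrm{poly}(\log n)}$ blow-up, so computing over a field of size $2^{\mathrm{poly}(\log n)}$ suffices.

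**The main obstacle** I anticipate is the bookkeeping that ties the three parameters together: verifying that the stated lower bound on $\epsilon$ and the stated form of $\alpha$ really do force $m = n^{\alpha + o(1)}$ and simultaneously $2\alpha \le 0.172$, so that both the matrix-multiplication lemma is applicable and the final bound collapses to $\tilde O(n^{2-\alpha})$. Everything else — the reduction of the min-distance predicate to an OR-of-thresholds, the monomial-expansion-as-inner-product trick, the amplification and union bound — is by now a well-worn template from \cite{alman2016polynomial}, and the proof should be largely a matter of instantiating it with $t=d-r$ and the right grouping size.
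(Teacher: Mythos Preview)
Your plan is the same as the paper's: partition into groups of size $s=n^{\alpha}$, encode ``some point of $S_i$ is close to $x_j$'' as an OR-of-thresholds, apply the probabilistic PTF of Theorem~\ref{prob. PTF for OR}, split each monomial into a group-side factor and a query-side factor, and evaluate all $n^{1-\alpha}\cdot n$ values via rectangular matrix multiplication (Lemma~\ref{matmult copper}), amplifying to high probability by $O(\log n)$ repetitions. So the architecture is right.

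Three bookkeeping points you should tighten. First, the monomial bound in Theorem~\ref{prob. PTF for OR} is $s\binom{n}{\Delta}$ with $n$ the \emph{inner} sum length, here $n=d$; so $m\le s\binom{O(d)}{\Delta}$, not $s\binom{sd}{\Delta}$. With your $\binom{sd}{\Delta}$ you pick up an extra $n^{O(\alpha\Delta)}=n^{O(\alpha^{2}\epsilon^{-1/3}\log n)}$ factor which, under the stated lower bound on $\epsilon$, is \emph{not} $n^{o(1)}$, and your claimed $m=n^{\alpha+o(1)}$ would fail. The paper's calculation uses $\binom{O(d)}{\Delta}$ and obtains $m\ll (n^{1-\alpha})^{0.1}$. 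Second, your threshold is off: with $t=d-r$ the ``$>2s$'' branch of Theorem~\ref{prob. PTF for OR} fires only when some agreement is $\ge t+\epsilon d=d-r+\epsilon d$, i.e.\ distance $\le r-\epsilon d$, not $\le r$; the paper takes $t=d-(r+\epsilon d)$ so that distance $\le r$ triggers the $t+\epsilon d$ branch and distance $>r+\epsilon d$ makes the $t$-branch false. (You also swapped which branch gets $t$ and which gets $t+\epsilon d$.) Third, the matrix product is $(n^{1-\alpha})\times m$ by $m\times n$, not square on both sides; the paper splits the $n$ columns into $n^{\alpha}$ blocks of width $n^{1-\alpha}$ and invokes Lemma~\ref{matmult copper} on each, giving $n^{\alpha}\cdot \tilde O(n^{2-2\alpha})=\tilde O(n^{2-\alpha})$ --- your line ``$\tilde O(n^{2-2\alpha})=\tilde O(n^{2-\alpha})$'' hides exactly this step.
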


%--------Building a Net-----------
\subsubsection{Building a Net}
Now, we present how we can build an approximate $r$-net for a data set, as in \cite{AEKP}: we first employ the sparsification technique and then build the distance matrix in the sparse dataset where we can search efficiently.
 The running time of building an approximate $r$-net is dominated by the time complexity of the construction of the distance matrix.

\begin{theorem}\label{hamming r-net}
Given $X \subset\{0,1\}^d$ with $|X| = n$, some distance $r\in\mathbb{R}$ and some $\epsilon \gg \frac{\log^6(d\log{n})}{\log^3{n}}$, we can compute a set $C$ that contains the centers of an approximate $r$-net with additive error at most $\epsilon$ with high probability in time $\tilde{O}(n^{2-\alpha}+dn^{1.7+\alpha})$, where $\alpha = \Omega(\frac{\epsilon^{\frac{1}{3}}}{\log(\frac{d}{\epsilon\log{n}})})$.
\end{theorem}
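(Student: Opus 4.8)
The plan is to glue the two tools above together exactly in the style of \cite{AEKP}: sparsify first, build the distance matrix on the sparse remainder, and then read a net off the matrix greedily. First I would run Theorem~\ref{brute force} on $X$, which in time $O(dn^{1.5})$ and with probability at least $1-n^{-0.2}$ returns a subset $X'\subseteq X$ with at most $n^{1.7}$ pairs at Hamming distance $\le r+\epsilon d$, together with a partial approximate $r$-net $C_0$ of $X\setminus X'$. A key bookkeeping point is that every point surviving in $X'$ lies at distance $>r+\epsilon d>r$ from every center picked during sparsification, so packing between $C_0$ and any centers we later add from $X'$ is automatic; covering of $X\setminus X'$ by $C_0$ with additive error $\epsilon$ is part of the guarantee of Theorem~\ref{brute force}.

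Next I would apply Theorem~\ref{distance matrix} to $X'$; since $|X'|\le n$ this still runs in $\tilde O(n^{2-\alpha})$ and yields a partition $S_1,\dots,S_m$ of $X'$ with $m\le n^{1-\alpha}$ and $\max_i|S_i|\le n^{\alpha}$, together with the matrix $W$ satisfying its two one-sided guarantees. I then build the net on $X'$ greedily using $W$: keep a ``covered'' flag for each point of $X'$; while an uncovered point remains, pick one, call it $x_j$, insert it into a set $C_{X'}$, read column $j$ of $W$, and for every row $i$ with $W_{i,j}>2|S_i|$ brute-force the small group $S_i$ and flag every $p\in S_i$ with $\|x_j-p\|_1\le r+\epsilon d$. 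Output $C:=C_0\cup C_{X'}$. For correctness I would argue: (covering) a point gets flagged only if it is within $r+\epsilon d$ of some element of $C_{X'}$, and an unflagged point is eventually selected into $C_{X'}$ itself, so $C$ covers $X$ with additive error $\epsilon$; (packing inside $C_{X'}$) if $x_j$ is picked while an earlier center $x_{j'}$ has $\|x_{j'}-x_j\|_1\le r$, then $\min_{p\in S_{g(j)}}\|x_{j'}-p\|_1\le r$, where $g(j)$ denotes $x_j$'s group, so Property~1 of Theorem~\ref{distance matrix} forces $W_{g(j),j'}>2|S_{g(j)}|$; hence $S_{g(j)}$ was triggered while $x_{j'}$ was processed, and since $\|x_{j'}-x_j\|_1\le r\le r+\epsilon d$ the point $x_j$ was flagged then, contradicting that $x_j$ is uncovered; (packing across $C_0$ and $C_{X'}$) this is the observation made above. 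All of this is conditioned on the high-probability event that $W$ meets the guarantees of Theorem~\ref{distance matrix} and on success of Theorem~\ref{brute force}; a union bound, together with $O(\log n)$-fold amplification of the constant-probability PTF estimates inside the matrix construction if needed, keeps the total success probability high.

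For the running time, sparsification costs $O(dn^{1.5})$ and building $W$ costs $\tilde O(n^{2-\alpha})$. In the greedy phase, scanning the columns of all selected centers costs at most $|C_{X'}|\cdot m\le n\cdot n^{1-\alpha}=n^{2-\alpha}$ entry reads, plus $O(n)$ amortized work to locate uncovered points. The remaining cost is the brute-force group scans, each costing $O(d\max_i|S_i|)=O(dn^{\alpha})$; the crux is to bound the number of triggered pairs $(x_j,S_i)$. Here I would charge each trigger with $S_i\ne g(j)$ to a pair $\{x_j,p\}$ with $p\in S_i$, $p\ne x_j$, and $\|x_j-p\|_1\le r+\epsilon d$, which exists by the contrapositive of Property~2 of Theorem~\ref{distance matrix}; since a point lies in a unique group, each such close pair of $X'$ is charged at most twice, so there are $O(n^{1.7})$ such triggers, plus the at most $|C_{X'}|\le n$ triggers of the form $(x_j,g(j))$. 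Hence the greedy phase costs $O(dn^{1.7+\alpha})$, and summing gives $\tilde O(n^{2-\alpha}+dn^{1.7+\alpha})$, with the side condition $\epsilon\gg\log^6(d\log n)/\log^3 n$ inherited from Theorem~\ref{distance matrix}.

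The step I expect to be the main obstacle is exactly this trigger count: it is where the sparsity bound $n^{1.7}$ of Theorem~\ref{brute force} is converted into running time, and it is easy to overcount, because one group can be scanned by many different centers, because of the gray zone $(r,r+\epsilon d]$ on which $W$ gives no guarantee, and because $W$ itself can fail with small probability. The care is to use the contrapositive of Property~2 to exhibit a genuine close pair for every scan, to use that a flagged point is never later chosen as a center, and to keep track of which event the whole argument is conditioned on. The remaining points --- verifying $|X'|\le n$ does not hurt the matrix running time, and standard constant bookkeeping in $\epsilon$ --- are routine.
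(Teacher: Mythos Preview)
Your proposal is correct and follows essentially the same approach as the paper: sparsify via Theorem~\ref{brute force}, build the distance matrix on the remainder via Theorem~\ref{distance matrix}, then greedily sweep columns, triggering brute-force scans on groups flagged by $W$. Your write-up is in fact more careful than the paper's in two places: you explicitly justify packing within $C_{X'}$ via Property~1 of Theorem~\ref{distance matrix} (the paper asserts this in one sentence), and you give an explicit charging argument bounding the number of triggered $(x_j,S_i)$ pairs by $O(Y)+|C_{X'}|$, whereas the paper simply checks empirically that at most $n^{1.7}$ entries of $W$ exceed the threshold and restarts otherwise. The only thing the paper does that you leave implicit is this restart mechanism, which boosts the $1-n^{-0.2}$ success probability of Theorem~\ref{brute force}; you may want to mention it if ``high probability'' is meant in the $1-1/\mathrm{poly}(n)$ sense.
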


\begin{proof}
We apply Theorem \ref{brute force} to the set $X$ with radius $r$ and error $\epsilon$. This results in the remaining points $X'$, a partial approximate $r$-net $C'$ for $X\setminus X'$ and $Pr[Y \leq n^{1.7}]\geq 1-n^{-0.2}$, where $Y:=|\{\{i,j\}|x_i,x_j\in X',\|x_i-x_j\|_1\leq r+\epsilon d\}|$, in time $O(n^{1.5}d)$.
We then apply Theorem \ref{distance matrix} to assemble the distance matrix $W$ and the partition $S_1,...,S_{n^{1-\alpha}}$ on inputs $X',\epsilon$ and $r$. If we encounter more then $n^{1.7}$ entries $W_{ij}$ where $W_{ij}>2|S_i|$, we restart the algorithm. Since $Pr[Y\leq n^{1.7}]\geq 1-n^{-0.2}$, with high probability we pass this stage in a constant number of runs.

Next, we iterate top down over every column of $W$. For every column $j$, first check if $x_j$ is already deleted. If this is the case we directly skip to the next column. Otherwise set $C = C\cup \{x_j\}$ and delete $x_j$. For every entry in that column such that $W_{i,j}>2|S_i|$, we then delete every $x \in S_i$ where $\|x_j-x\|_1 \leq r+\epsilon d$.

As we iterate over every column of $W$, which correspond to every point of $X'$, the net fulfills covering. Since points that are added as centers were not covered within $r$ of a center by previous iterations, $C$ also fulfills packing. Thus $C$ now contains the net points of an approximate $r$-net of $X'$.

By Theorem \ref{distance matrix}, building the distance matrix takes $\tilde{O}(n^{2-\alpha})$ time and iterating over every entry of $W$ takes $\tilde{O}(n^{2-\alpha})$ time as well. For at most $n^{1.7}$ of these entries, we check the distance between points in a set $S_i$ and the point of the current column which takes another $O(n^{1.7+\alpha}d)$. The runtime is thus as stated.
\end{proof}

%------------------------------------------
%--------Building a Net--------------------
%------------------------------------------
\subsection{Approximate r-nets in Euclidean Space}
In this subsection, we reduce the problem of computing approximate $r$-nets from Euclidean to Hamming space. Then, we apply Theorem \ref{hamming r-net} to compute approximate $r$-nets in Euclidean space. 

% \begin{equation*}
% \begin{split}
% \|p - q\|_1 &\leq (1+\epsilon)r \qquad \text{(multiplicative error)}
% \end{split}
% \end{equation*}
We distinguish between $\ell_1$ and $\ell_2$ metrics. Specifically, we first show how to map the dataset from $\mathbb{R}^d$ space with $\ell_1$ metric to Hamming space. Then, we present a reduction from $\ell_2$ to $\ell_1$ Euclidean space. Note that the error in the original Euclidean space is multiplicative, while in Hamming space additive. Although the proofs of both Theorem \ref{r-net in l1} and \ref{l2 to l1} use the same mappings as in \cite{alman2016polynomial}, the problem of computing $r$-nets is different and more general than finding the closest pair, thus we cannot directly cite their results.

%------------------l1 case------------------------------------
\subsubsection{$\ell_1$ case}
To reduce the problem of computing approximate $r$-nets from the Euclidean to the Hamming space (with $\ell_1$ metric) we use  Locality Sensitive Hashing (LSH), formally defined below.

%Locality sensitive hashing is a recently developed, well known technique that has all kinds of applications in various fields. In the following we formally define the notion of locality sensitive hashing.

\begin{definition}\label{sensitive hashing}
Let $r,c \in\mathbb{R}$ and $p_1,p_2\in[0,1]$ where $p_1>p_2$. A distribution $\mathcal{D}$ of hash functions is called \textit{$(r,cr,p_1,p_2)$-sensitive}, if, for a metric space $V$ under norm $\|\cdot\|$, a hash function $h$ randomly drawn from $\mathcal{D}$ satisfies the following conditions for any points $x,y\in V$:
\begin{enumerate}
\item $\|x-y\| \leq r \Rightarrow Pr[h(x) = h(y)] \geq p_1$
\item $\|x-y\| \geq cr \Rightarrow Pr[h(x) = h(y)] \leq p_2$
\end{enumerate}
We call hashing methods that exhibit these properties \textit{locality sensitive hash functions (LSH)}.
\end{definition}

Now, we show how to compute approximate $r$-nets in Euclidean space under the $\ell_1$ norm, by employing a specific instance of LSH functions. 

\begin{restatable}[]{theorem}{ellone}\label{r-net in l1}
For a set of input points $X\subset\mathbb{R}^d$, some radius $r\in\mathbb{R}$ and some error $\epsilon\gg \frac{\log^6(\log{n})}{\log^3{n}}$, with high probability, we can construct a $(1+\epsilon)$-approximate $r$-net under $\ell_1$ euclidean norm $\|\cdot\|$ in time $\tilde{O}(dn + n^{2-\alpha})$ where $\alpha = \Omega(\epsilon^{\frac{1}{3}}/\log(\frac{1}{\epsilon}))$.
\end{restatable}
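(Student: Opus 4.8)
The plan is to reduce the Euclidean $\ell_1$ instance to a Hamming instance, apply Theorem~\ref{hamming r-net}, and pull the resulting net back. The reduction is a randomized embedding $\varphi\colon X\to\{0,1\}^k$ built from locality-sensitive hashing. Fix a $(r,(1+\epsilon)r,p_1,p_2)$-sensitive LSH family for $\ell_1$, e.g. the one given by $1$-stable (Cauchy) projections $x\mapsto\lfloor(\langle a,x\rangle+b)/w\rfloor$; choosing the bucket width $w=\Theta(r)$ keeps $p_1,p_2$ bounded away from $0$ and $1$, and since the underlying collision-probability function is smooth and strictly decreasing with derivative bounded away from $0$ on compact ranges away from the origin, it guarantees a gap $p_1-p_2=\Omega(\epsilon)$. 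Draw $k=\Theta(\epsilon^{-2}\log n)$ independent hash functions $h_1,\dots,h_k$ from this family and independent pairwise-independent functions $g_1,\dots,g_k$ from bucket names to $\{0,1\}$, and set $\varphi(x)=(g_1(h_1(x)),\dots,g_k(h_k(x)))$. Computing $\varphi$ on all of $X$ takes $O(ndk)=\tilde O(dn)$ time, since $k=\tilde O(1)$ under the paper's conventions.

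\textbf{Distance behaviour.} For a fixed pair $x,y$ with collision probability $q=\Pr[h(x)=h(y)]$, the coordinates of $\varphi(x)$ and $\varphi(y)$ disagree independently, each with probability $(1-q)/2$; hence $\|\varphi(x)-\varphi(y)\|_1$ is a sum of $k$ independent Bernoulli variables with mean $k(1-q)/2$, which is $\le k(1-p_1)/2$ when $\|x-y\|_1\le r$ and $\ge k(1-p_2)/2$ when $\|x-y\|_1\ge(1+\epsilon)r$. A Chernoff bound gives deviation $O(\sqrt{k\log n})$ from the mean with probability $1-n^{-3}$, and a union bound over the fewer than $n^2$ pairs makes this hold simultaneously, w.h.p. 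Since $k(1-p_1)/2$ and $k(1-p_2)/2$ differ by $\Omega(\epsilon k)$, which by the choice of $k$ dominates $\sqrt{k\log n}$, we can pick an integer radius $\rho$ and a value $\epsilon'=\Theta(\epsilon)$ so that, w.h.p., $\|x-y\|_1\le r\Rightarrow\|\varphi(x)-\varphi(y)\|_1<\rho$ and $\|x-y\|_1\ge(1+\epsilon)r\Rightarrow\|\varphi(x)-\varphi(y)\|_1>\rho+\epsilon' k$.

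\textbf{Back to $\ell_1$ and running time.} Apply Theorem~\ref{hamming r-net} to $\varphi(X)\subset\{0,1\}^k$ with radius $\rho$ and additive error $\epsilon'$, obtaining centers $C$, and replace each element of $C$ by an arbitrary $\varphi$-preimage in $X$. The Hamming packing property ($\ge\rho$ apart) together with the first implication forces distinct centers to be $\ge r$ apart in $\ell_1$; the Hamming covering property ($\le\rho+\epsilon' k$) together with the contrapositive of the second implication forces every point of $X$ to lie within $(1+\epsilon)r$ of some center, so $C$ is a $(1+\epsilon)$-approximate $r$-net of $X$. For the running time, the reduced dimension is $k=\mathrm{poly}(1/\epsilon)\cdot\log n=\tilde O(1)$, so Theorem~\ref{hamming r-net} runs in $\tilde O(n^{2-\alpha}+k\,n^{1.7+\alpha})=\tilde O(n^{2-\alpha})$ for the small $\alpha$ relevant here; plugging $d=k$ and $\epsilon'=\Theta(\epsilon)$ into its parameter bound gives $\log(d/(\epsilon'\log n))=\Theta(\log(1/\epsilon))$, hence $\alpha=\Omega(\epsilon^{1/3}/\log(1/\epsilon))$; and its error-floor hypothesis $\epsilon'\gg\log^6(k\log n)/\log^3 n$ follows from the assumed $\epsilon\gg\log^6(\log n)/\log^3 n$, since that assumption forces $\log(1/\epsilon)=O(\log\log n)$ and hence $\log(k\log n)=O(\log\log n)$. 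Adding the $\tilde O(dn)$ embedding cost gives the claimed $\tilde O(dn+n^{2-\alpha})$.

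\textbf{Main obstacle.} The delicate part is the calibration in the middle step: verifying that the $\ell_1$ LSH can be tuned to a collision-probability gap of order exactly $\epsilon$, that $k=\Theta(\epsilon^{-2}\log n)$ suffices to make concentration override this gap for all $\binom n2$ pairs, that the additive Hamming slack $\epsilon' k$ can be squeezed strictly between the close- and far-distance regimes, and that the reduced dimension stays polylogarithmic so that the value of $\alpha$ and the error-floor condition inherited from Theorem~\ref{hamming r-net} degrade to exactly the forms claimed in the statement.
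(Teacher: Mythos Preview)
Your proof is correct and follows essentially the same route as the paper: embed $X$ into $\{0,1\}^k$ with $k=\Theta(\epsilon^{-2}\log n)$ via an $\ell_1$-LSH composed with a random bit, use concentration to turn ``$\le r$'' vs.\ ``$\ge(1+\epsilon)r$'' into an additive Hamming gap of order $\epsilon k$, invoke Theorem~\ref{hamming r-net}, and pull the centers back. The only difference is the choice of LSH family --- you use $1$-stable (Cauchy) projections while the paper uses the coordinate-sampling hash $h_{ij}(p)=\lfloor (p_{a_{ij}}+b_{ij})/(2r)\rfloor$ concatenated $d$ times --- but both yield the needed $p_1-p_2=\Omega(\epsilon)$ gap and the same $\tilde O(dn)$ embedding cost, so this is cosmetic.
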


\begin{proof}
The following inequalities, are the same as the ones derived in \cite{alman2016polynomial} for their algorithm that finds all nearest/furthest neighbors of a set.
First apply a variant of locality sensitive hashing, to map points from $\ell_1$ to Hamming space. For each point $p\in X$ and $i\in\{1,...,k\}$, $k=O(\epsilon^{-2}\log{n})$, we define hash functions $h_i =\{h_{i1}(p),...,h_{id}(p)\}$, where $h_{ij} = \left \lfloor{\frac{p_{a_{ij}}+b_{ij}}{2r}}\right \rfloor$, $a_{ij}\in\{1,...,d\}$ and $b_{ij}\in[0,2r)$ sampled independently uniformly at random. For each value $h_i(p)$, define $f_i(p)=0$ with probability $\frac{1}{2}$ or $f_i(p)=1$ with probability $\frac{1}{2}$. We then define a new point in Hamming space as $f(p) = (f_1(p),...,f_k(p))$. We have that for any $p,q\in X$, 
% \begin{align*}
$Pr[h_{ij}(p)\neq h_{ij}(q)] = \frac{1}{d}\sum_{a=1}^d\min\{\frac{|p_a-q_a|}{2r},1\}$ and \\
$Pr[f_i(p) \neq f_i(q)] \\
= Pr[f_i(p) \neq f_i(q)|h_i(p)=h_i(q)]Pr[h_i(p)=h_i(q)] \\
+ Pr[f_i(p) \neq f_i(q)|h_i(p)\neq h_i(q)]Pr[h_i(p)\neq h_i(q)]\\
= 0 + \frac{1}{2}Pr[h_i(p)\neq h_i(q)]= \frac{1}{2} Pr[\bigvee_{j=1}^d h_{ij}(p) \neq h_{ij}(q)]\\
% &= \frac{1}{2} (1-Pr[\bigwedge_{j=1}^d h_{ij}(p) = h_{ij}(q)])\\
= \frac{1}{2} (1-\prod_{j=1}^d Pr[h_{ij}(p) = h_{ij}(q)])\\
= \frac{1}{2} (1-\prod_{j=1}^d (1 - Pr[h_{ij}(p) \neq h_{ij}(q)]))$
% \end{align*}

% $Pr[f_i(p) \neq f_i(q)] = Pr[f_i(p) \neq f_i(q)|h_i(p)=h_i(q)]Pr[h_i(p)=h_i(q)] + Pr[f_i(p) \neq f_i(q)|h_i(p)\neq h_i(q)]Pr[h_i(p)\neq h_i(q)]= 0 + \frac{1}{2}Pr[h_i(p)\neq h_i(q)]= \frac{1}{2} Pr[\bigvee_{j=1}^d h_{ij}(p) \neq h_{ij}(q)]= \frac{1}{2} (1-Pr[\bigwedge_{j=1}^d h_{ij}(p) = h_{ij}(q)])= \frac{1}{2} (1-\prod_{j=1}^d Pr[h_{ij}(p) = h_{ij}(q)])= \frac{1}{2} (1-\prod_{j=1}^d (1 - Pr[h_{ij}(p) \neq h_{ij}(q)]))$. Moreover, $Pr[h_{ij}(p)\neq h_{ij}(q)] = \frac{1}{d}\sum_{a=1}^d\min\{\frac{|p_a-q_a|}{2r},1\}$.\\ 

Thus, the following properties hold for any $p,q\in X$:
% For any $p,q\in X$, the following holds:\\
% $Pr[h_{ij}(p)\neq h_{ij}(q)] = \frac{1}{d}\sum_{a=1}^d\min\{\frac{|p_a-q_a|}{2r},1\}$ thus \\$Pr[f_i(p) \neq f_i(q)] = Pr[f_i(p) \neq f_i(q)|h_i(p)=h_i(q)]Pr[h_i(p)=h_i(q)] + Pr[f_i(p) \neq f_i(q)|h_i(p)\neq h_i(q)]Pr[h_i(p)\neq h_i(q)]= 0 + \frac{1}{2}Pr[h_i(p)\neq h_i(q)]= \frac{1}{2} Pr[\bigvee_{j=1}^d h_{ij}(p) \neq h_{ij}(q)]= \frac{1}{2} (1-Pr[\bigwedge_{j=1}^d h_{ij}(p) = h_{ij}(q)])= \frac{1}{2} (1-\prod_{j=1}^d Pr[h_{ij}(p) = h_{ij}(q)])= \frac{1}{2} (1-\prod_{j=1}^d (1 - Pr[h_{ij}(p) \neq h_{ij}(q)]))$. Therefore,
% \begin{equation}
% \begin{split}
% &Pr[h_{ij}(p)\neq h_{ij}(q)] = \frac{1}{d}\sum_{a=1}^d\min\{\frac{|p_a-q_a|}{2r},1\}\\
% &Pr[f_i(p) \neq f_i(q)] \\
% &= Pr[f_i(p) \neq f_i(q)|h_i(p)=h_i(q)]Pr[h_i(p)=h_i(q)] \\
% &+ Pr[f_i(p) \neq f_i(q)|h_i(p)\neq h_i(q)]Pr[h_i(p)\neq h_i(q)]\\
% &= 0 + \frac{1}{2}Pr[h_i(p)\neq h_i(q)]= \frac{1}{2} Pr[\bigvee_{j=1}^d h_{ij}(p) \neq h_{ij}(q)]\\
% &= \frac{1}{2} (1-Pr[\bigwedge_{j=1}^d h_{ij}(p) = h_{ij}(q)])\\
% &= \frac{1}{2} (1-\prod_{j=1}^d Pr[h_{ij}(p) = h_{ij}(q)])\\
% &= \frac{1}{2} (1-\prod_{j=1}^d (1 - Pr[h_{ij}(p) \neq h_{ij}(q)]))
% \end{split}
% \end{equation}
\begin{enumerate}
\item If $\|p-q\|\leq r$ then $Pr[h_{ij}(p) \neq h_{ij}(q)]\leq\frac{1}{2d}$ and thus \\
$Pr[f_i(p) \neq f_i(q)] \leq  \frac{1}{2} (1-(1-\frac{1}{2d})^d) := \alpha_0$
\item If $\|p-q\|\geq (1+\epsilon)r$ then $Pr[h_{ij}(p) \neq h_{ij}(q)]\geq\frac{1+\epsilon}{2d}$ and thus\\
$Pr[f_i(p) \neq f_i(q)] \leq  \frac{1}{2} (1-(1-\frac{1+\epsilon}{2d})^d) := \alpha_1$
\end{enumerate}
Then it follows that $\alpha_1-\alpha_0 = \Omega(\epsilon)$. By applying a Chernoff bound, we derive the following:
\begin{enumerate}
\item If $\|p-q\|\leq r$ then $\mathbb{E}[\|f(p) - f(q)\|] = \sum_{i=1}^k Pr[\|f_i(p)-f_i(q)\|]\leq k\alpha_0$ and thus $Pr[\|f(p) - f(q)\| \leq \alpha_0 k + O(\sqrt[]{k\log{n}}):= A_0]\geq 1-O(\frac{1}{n})$ 
\item If $\|p-q\|\geq (1+\epsilon)r$ then $\mathbb{E}[\|f(p) - f(q)\|] = \sum_{i=1}^k Pr[\|f_i(p)-f_i(q)\|] \geq k\alpha_1$ and thus $Pr[\|f(p) - f(q)\| \geq \alpha_1 k - O(\sqrt[]{k\log{n}}):= A_1]  \geq 1-O(\frac{1}{n})$
\end{enumerate}
As we know that $\alpha_1-\alpha_0 = \Omega(\epsilon)$ it is easy to see that $A_1-A_0 = k(\alpha_1-\alpha_0)-O(\sqrt[]{k\log{n}}) = \Omega(k\epsilon)$. For the new set of points $X':=f(X)$, we construct an approximate $r$-net with additive error $\Omega(\epsilon)$, which yields the center points of an approximate $r$-net of the original points with multiplicative error $(1+\epsilon)$. \\
We hence apply Theorem \ref{hamming r-net} on inputs $X = X'$, $d=k$, $\epsilon = \Omega(\epsilon)$ and $r=A_0$. This gives us the centers $C$ of an approximate $r$-net for $X'$ in time $\tilde{O}(n^{2-\alpha}+n^{1.7+\alpha})$ where $\alpha = \Omega(\epsilon^{\frac{1}{3}}/\log(\frac{1}{\epsilon}))$. The points that get mapped to the net points in $C$ are then the centers of a $(1+\epsilon)$-approximate $r$-net of the points in $X$ under $\ell_1$ metrics with high probability. Applying this version of locality sensitive hashing to $X'$ takes $O(dkn)=\tilde{O}(dn)$ time, which leads to the runtime as stated.
\end{proof}

%----------------------l2 case-------------------------------------
\subsubsection{$\ell_2$ case}
Firstly, we reduce the dimension of the dataset using the Fast Johnson Lindenstrauss Transform \cite{johnson1982embeddingl,johnson1984extensions}. Specifically, we use a variant that allows us to map $\ell_2$ points to $\ell_1$ points, while preserving a slightly perturbed all pair distance under the respective norm, as  for example seen in \cite{MAT}. Thus, we can construct approximate r-nets in the general Euclidean space, as formally stated below.

\begin{restatable}[]{theorem}{elltwo}\label{l2 to l1}
For set of input points $X\subset\mathbb{R}^d$, some radius $r\in\mathbb{R}$, some error $\epsilon\gg \frac{\log^6(\log{n})}{\log^3{n}}$,  with high probability, we can construct a $(1+\epsilon)$-approximate $r$-net under $\ell_2$ euclidean norm $\|\cdot\|$ in time $\tilde{O}(dn^{1.7+\alpha} + n^{2-\alpha})$ where $\alpha = \Omega(\epsilon^{\frac{1}{3}}/\log(\frac{1}{\epsilon}))$.
\end{restatable}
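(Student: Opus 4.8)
The plan is to reduce the $\ell_2$ instance to the $\ell_1$ instance of Theorem~\ref{r-net in l1} by a single randomized embedding and then transfer the resulting net back to the original space. Concretely, I would apply a Johnson--Lindenstrauss-type transform $g:\mathbb{R}^d\to\mathbb{R}^{d'}$ with $d'=O(\epsilon^{-2}\log n)$ --- the $\ell_2\to\ell_1$ variant described in \cite{MAT}, obtained by composing a classical JL projection with a random projection whose image norm is $\ell_1$ --- chosen so that, with high probability, there is a fixed scaling factor $s>0$ with $(1-\epsilon)\|p-q\|_2 \le \tfrac{1}{s}\|g(p)-g(q)\|_1 \le (1+\epsilon)\|p-q\|_2$ \emph{simultaneously} for all $p,q\in X$. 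The transform is applied to each of the $n$ points.

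Next I would run the algorithm of Theorem~\ref{r-net in l1} on $g(X)\subset\mathbb{R}^{d'}$ with radius $r':=s(1+\epsilon)r$ and error parameter $\epsilon':=\Theta(\epsilon)$; since $\epsilon\gg\frac{\log^6(\log n)}{\log^3 n}$ the hypotheses of that theorem are met, and it returns with high probability the centers $C'$ of a $(1+\epsilon')$-approximate $r'$-net of $g(X)$ under $\ell_1$. Because each point of $C'$ is literally $g(x_i)$ for a specific input index $i$, I can define $C\subseteq X$ to be the corresponding preimages (tracking indices, so no injectivity is needed). It then remains to check that $C$ is a $(1+\Theta(\epsilon))$-approximate $r$-net of $X$ under $\ell_2$: for distinct $p,q\in C$ we have $\|g(p)-g(q)\|_1\ge r'$, so the left distortion bound gives $\|p-q\|_2\ge r'/(s(1+\epsilon))=r$ (exact packing); and for any $x\in X$ there is $c\in C$ with $\|g(x)-g(c)\|_1\le(1+\epsilon')r'$, so the right distortion bound gives $\|x-c\|_2\le (1+\epsilon')r'/(s(1-\epsilon))=\tfrac{(1+\epsilon')(1+\epsilon)}{1-\epsilon}\,r$, which is $(1+\epsilon)r$ once the whole construction is run with $\epsilon$ replaced by a sufficiently small constant multiple of itself.

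For the running time, the embedding costs $\tilde{O}(dn)$ (a matrix--vector product per point, or less with the fast variant), and the call to Theorem~\ref{r-net in l1} is made in dimension $d'=\tilde{O}(1)$, so its $dn$ term degenerates to $\tilde{O}(n)$ and that step costs $\tilde{O}(n^{2-\alpha})$ with $\alpha=\Omega(\epsilon^{1/3}/\log(1/\epsilon))$; adding the two contributions stays within the claimed $\tilde{O}(dn^{1.7+\alpha}+n^{2-\alpha})$.

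The main obstacle is that one cannot simply invoke the $\ell_2\to\ell_1$ reduction of \cite{alman2016polynomial} as a black box: that analysis is tuned to preserving a single (closest) pair, whereas here both net properties must survive, so the distortion inequality has to hold for \emph{all} $\binom{n}{2}$ pairs at once. Re-deriving this --- which forces the $\log n$ factor in $d'$ through a union bound, and forces the careful choice of $r'$ so that packing remains $\ge r$ rather than merely $\ge(1-\epsilon)r$ --- is the real content; once it is in place, the transfer of covering and packing is immediate, and the additive-to-multiplicative conversion has already been absorbed inside Theorem~\ref{r-net in l1}, so no further error compounding occurs.
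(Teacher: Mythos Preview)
Your proposal is correct and follows essentially the same route as the paper: a Gaussian $\ell_2\!\to\!\ell_1$ embedding into $O(\epsilon^{-2}\log n)$ dimensions (the result from \cite{MAT}) followed by a black-box call to Theorem~\ref{r-net in l1}, with centers pulled back by index. Your write-up is in fact more careful than the paper's --- the paper's proof is a terse four-line sketch that neither tunes $r'$ to preserve exact packing nor spells out the all-pairs union bound --- but the argument is the same.
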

% \begin{proof}
% We define a mapping from $\ell_2$ to $\ell_1$. Every $x\in X$ gets mapped to the vector $f(x) = (f_1(x),...,f_k(x))$ where $k =(\epsilon^{-2} \log{n})$ and $f_i(x) = \sum_{j = 1}^{d}\sigma_{ij}x_j$. The coefficients $\sigma_{ij}$'s are independent normally distributed random variables with mean 0 and variance 1. As presented in \cite{MAT}, it holds that for any two points $x,y\in X$, \\
% $(1-\epsilon)\|x-y\|_2\leq C\|f(x)-f(y)\|_1\leq(1+\epsilon)\|x-y\|_2$ with probability $1-O(\frac{1}{n})$ for some constant $C$. The cost of applying the mapping is $O(dkn)$. We then employ Theorem \ref{r-net in l1} on the new set of points, to get a $(1+\epsilon)$-approximate $r$-net in the time stated.
% \end{proof}

%-----------------------------------------------------------
%----------------------APPLICATIONS-------------------------
%------------------------------------------------------------
\section{Applications}

In the following section, we present applications for the algorithms we presented in the previous section. To that end, we exhibit an improvement on a framework called Net \& Prune. Net \& Prune was invented by \cite{PR14} for low dimensional applications. An extended version of the framework, that is efficient  in higher dimensional datasets, was later presented by \cite{AEKP}. In what follows, we apply the approximate $r$-net algorithm to immediately improve the high dimensional framework. We then present various applications, that depend on approximate $r$-nets and the framework.

%------------------------------------------
%--------Net and Prune--------------------
%------------------------------------------
\subsection{Net \& Prune Framework}
Net \& Prune mainly consists of two algorithms, {\fontfamily{cmtt}\selectfont ApprxNet} and {\fontfamily{cmtt}\selectfont DelFar}, which are alternatively called by the framework, and a data structure that is specific to the problem we want to solve. When supplied with these, the framework returns an interval with constant spread, which is guaranteed to contain the optimal solution to the objective of the desired problem. To improve the framework, we first improve these two algorithms. {\fontfamily{cmtt}\selectfont ApprxNet} computes an approximate $r$-net for a given point set and {\fontfamily{cmtt}\selectfont DelFar} deletes points the isolated points, i.e. the points that do not contain any other point in a ball of radius $r$ around them.

As an improvement to {\fontfamily{cmtt}\selectfont ApprxNet}, we refer to Theorem \ref{r-net in l1} and Theorem \ref{l2 to l1}. We now present an algorithm, that yields an improved version of {\fontfamily{cmtt}\selectfont DelFar}:
\begin{theorem}
For a set of points $X$, some error \\$\epsilon\gg \frac{\log^6(\log{n})}{\log^3{n}}$, a radius $r \in\mathbb{R}^d$ and the norm $\|\cdot\|$, that denotes the $\ell_1$ or $\ell_2$ norm, we can construct an algorithm {\fontfamily{cmtt}\selectfont DelFar} that outputs, with high probability, a set $F$, where the following holds:
\begin{enumerate}
\item If for any point $p\in X$ it holds that \\$\forall q \in X, q\neq p , \|p-q\|>(1+\epsilon)r$ then $p\not\in F$
\item If for any point $p\in X$ it holds that \\$\exists q\in X, q\neq p,  \|p-q\|\leq r$ then $p\in F$
\end{enumerate}
We do this in time $\tilde{O}(dn+n^{2-\alpha})$, where $\alpha = \Omega(\frac{\epsilon^{\frac{1}{3}}}{\log(1/\epsilon)})$.
\end{theorem}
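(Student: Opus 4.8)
The plan is to follow the same route as Theorems~\ref{r-net in l1} and~\ref{l2 to l1}: reduce from the $\ell_1$ (respectively $\ell_2$) metric to Hamming space, and then, instead of building a full net via Theorem~\ref{hamming r-net}, post-process the distance matrix of Theorem~\ref{distance matrix} to decide, for each point, whether it has a close neighbour. The reason this is cheaper than building a net is that {\fontfamily{cmtt}\selectfont DelFar} is only asked to separate points with a neighbour within $r$ from points all of whose neighbours are beyond $(1+\epsilon)r$; this factor-$(1+\epsilon)$ gap is precisely enough to absorb both the distortion of the Hamming reduction and the additive error of the distance matrix, so no candidate pair ever has to be re-checked in the original $d$-dimensional space --- which is what removes the $n^{1.7+\alpha}d$ term present in the $\ell_2$ $r$-net and the brute-force sparsification of Theorem~\ref{brute force}.

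\textbf{Reduction.} For $\ell_1$ I would apply the LSH map $f$ from the proof of Theorem~\ref{r-net in l1} with $k=O(\epsilon^{-2}\log n)$, producing $X'=f(X)\subseteq\{0,1\}^k$ and thresholds $A_0<A_1$ with $A_1-A_0=\Omega(k\epsilon)$ such that the following hold simultaneously for all pairs $p\neq q$ with high probability: $\|p-q\|\le r$ implies $\|f(p)-f(q)\|_1\le A_0$, and $\|p-q\|\ge(1+\epsilon)r$ implies $\|f(p)-f(q)\|_1\ge A_1$. (Enlarging the hidden constant in $k$ makes the per-pair failure probability $n^{-\Omega(1)}$, so the union bound over the $O(n^2)$ pairs goes through while $k$ stays polylogarithmic.) For $\ell_2$ I would precede this by the $\ell_2\to\ell_1$ Johnson--Lindenstrauss map of Theorem~\ref{l2 to l1}; the composition costs $\tilde{O}(dn)$ and again lands in $\{0,1\}^k$.

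\textbf{Core.} Run Theorem~\ref{distance matrix} on $X'$ with radius $r':=A_0$ and additive parameter $\epsilon'$ chosen so that $\epsilon'k<A_1-A_0$ (possible since $A_1-A_0=\Omega(k\epsilon)$; the hypothesis $\epsilon\gg\log^6(\log n)/\log^3 n$ is what makes this call legal), obtaining the partition $S_1,\dots,S_{n^{1-\alpha}}$ of $X'$ and the matrix $W$ in time $\tilde{O}(n^{2-\alpha})$. Writing $S_{i(j)}$ for the block containing $x_j$, declare $x_j\in F$ iff (i) $W_{i,j}>2|S_i|$ for some $i\neq i(j)$, or (ii) $S_{i(j)}$ contains a point other than $x_j$ at Hamming distance $\le r'$ from $x_j$. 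Condition (ii) is tested by brute force inside each block, costing $\sum_i|S_i|^2\cdot k=\tilde{O}(n^{1+\alpha})$ in total, which is dominated by $n^{2-\alpha}$; for $\ell_2$ we finally output the original points whose images lie in $F$.

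\textbf{Correctness, and the hard part.} Conditioning on the good events above: if $x_j$ has some $q\neq x_j$ with $\|x_j-q\|\le r$ then $\|f(x_j)-f(q)\|_1\le r'$, which is detected by (ii) when $f(q)\in S_{i(j)}$ and otherwise forces $W_{i,j}>2|S_i|$ for $f(q)$'s block by Property~1 of Theorem~\ref{distance matrix}, so (i) fires; conversely, if every $q\neq x_j$ satisfies $\|x_j-q\|>(1+\epsilon)r$ then every Hamming distance from $x_j$ to another point is $\ge A_1>r'+\epsilon'k$, so (ii) fails and, by Property~2 of the same theorem, $|W_{i,j}|\le|S_i|$ for every $i\neq i(j)$, so (i) fails as well; points whose nearest neighbour sits in $(r,(1+\epsilon)r]$ are unconstrained by the statement. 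The total time is $\tilde{O}(dn)$ for the reduction plus $\tilde{O}(n^{2-\alpha})$. The step I expect to be the main obstacle is the bookkeeping that makes the two layers of slack line up --- pinning $\epsilon'$ against the $\Omega(k\epsilon)$ gap, excluding each point's own block from the matrix-based test (where it is trivially close to itself) and recovering those cases through the within-block brute force, and verifying that this brute force together with the union bound over all pairs stays inside the $\tilde{O}(dn+n^{2-\alpha})$ budget.
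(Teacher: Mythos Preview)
Your proposal is correct, but it takes a genuinely different route from the paper.

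The paper keeps the full approximate-$r$-net pipeline intact and merely redirects the bookkeeping. It still maps to Hamming space, still runs the brute-force sparsification of Theorem~\ref{brute force} (now adding each randomly chosen center \emph{and every point it covers} to $F$, while discarding centers that covered nothing), still builds the distance matrix of Theorem~\ref{distance matrix}, and still \emph{re-verifies} every flagged entry by explicitly computing the Hamming distances between the column point and the members of $S_i$, inserting the genuinely close ones into $F$ and deleting them. Points whose column is entirely unflagged are simply ignored. The stated runtime is obtained because all the distance checks happen in the $k=O(\epsilon^{-2}\log n)$-dimensional Hamming space, though the paper only gestures at this (``the running time \ldots\ is the same as in Theorem~\ref{l2 to l1}'').

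You instead dispense with sparsification and with per-pair verification altogether, exploiting the fact that {\fontfamily{cmtt}\selectfont DelFar} only needs one bit per point (``is there a close neighbour?''), which the matrix entries already supply thanks to the $\epsilon'k<A_1-A_0$ slack. The only obstruction is that $x_j$'s own block always flags itself; you patch that with a within-block brute force costing $\sum_i|S_i|^2k=\tilde O(n^{1+\alpha})\le\tilde O(n^{2-\alpha})$ since $\alpha\le 0.172$. This is cleaner and makes the $\tilde O(dn+n^{2-\alpha})$ bound more transparent than the paper's argument; the paper's version, on the other hand, is a smaller delta from the $r$-net code it already has, and as a by-product it actually identifies a witnessing neighbour for each point placed in $F$ (which your approach does not). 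Both arguments are sound.
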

\begin{proof}
We prove the Theorem for the $\ell_1$ metric. For an $\ell_2$ instance we can simply apply the mapping of Theorem \ref{l2 to l1} and the proof holds.
Initially, we map the points to Hamming space, applying the techniques described in Theorem \ref{r-net in l1}. During the brute force part of the algorithm we do the following: we delete each point that is covered by a center and then we add both the point and the center to set $F$. We do not add centers to $F$ that do not cover any other points. Later, when traversing the distance matrix, we check each entry that indicates if the partition contains a close point. We calculate the distances between the current point and all points of the partition. We add in set $F$, and then delete, the points that are actually close. We ignore points, where every entry in its column indicate no close points. In the end, we return the set $F$. 
The running time of the algorithm is the same as in Theorem \ref{l2 to l1}, since 
deleting a point after adding it to set $F$ takes constant time.
\end{proof}
The Net \& Prune framework allows us to solve various so called nice distance problems. As presented by \cite{PR14}, the problems solved in the subsections to come, are all proven to be of such kind. One of the properties of such problems is, that there needs to exist a so called $(1+\epsilon)$-decider for that problem. In the following, we denote a formal definition of such deciders.

\begin{definition}[\cite{AEKP}]
Given a function $f: X\rightarrow\mathbb{R}$, we call a decider procedure a $(1+\epsilon)$-decider for $f$, if for any $x\in X$ and $r > 0$, $decider_f(r,X)$ returns one of the following: (i) $f(x) \in [\beta, (1+\epsilon)\beta]$ for some real $\beta$, (ii) $f(x) < r$, or (iii) $f(x) > r$.
\end{definition}

Even though \cite{PR14} presented a decider for each of the problems that follow, the extended framework by \cite{AEKP} requires deciders to be efficient, as otherwise the frameworks runtime does not hold. This is a good opportunity to apply Theorem \ref{r-net in l1} and Theorem \ref{l2 to l1} from the previous section. In the following sections, we employ the theorem below to find constant spread intervals. These contain the solutions to nice distance problems. We then apply deciders to approximate the solution of the problem.

\begin{theorem}[\cite{AEKP}]
For $c\geq 64$, the Net \& Prune algorithm computes in $O(dn^{1.999999})$ time a constant spread interval containing the optimal value $f(X)$, with probability $1-o(1)$.
\end{theorem}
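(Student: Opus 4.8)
The plan is to follow \cite{PR14,AEKP} and treat Net \& Prune as a black box built on top of its two subroutines {\fontfamily{cmtt}\selectfont ApprxNet} and {\fontfamily{cmtt}\selectfont DelFar} together with the $(1+\epsilon)$-decider of the nice distance problem at hand; since the theorem is exactly that of \cite{AEKP}, I would re-derive it so as to make the round structure explicit and to check that our faster {\fontfamily{cmtt}\selectfont ApprxNet} (Theorems~\ref{r-net in l1} and~\ref{l2 to l1}) and {\fontfamily{cmtt}\selectfont DelFar} (the improved theorem above) slot in unchanged. Concretely, I would recall that the algorithm runs for $O(\log n)$ rounds: it maintains a working point set $X_i$ with $X_0 = X$; in round $i$ it estimates a ``good'' scale $r_i$ from a weighted median of interpoint distances, uses {\fontfamily{cmtt}\selectfont ApprxNet} to collapse $X_i$ into an approximate $r_i$-net and {\fontfamily{cmtt}\selectfont DelFar} to strip the points that are isolated at scale $r_i$, and queries the decider at $r_i$ to learn whether $f(X) < r_i$, $f(X) > r_i$, or $f(X)$ already lies in a bracket of spread $c$; in the first two cases it continues on the pruned, respectively netted, point set, and in the third it outputs the bracket.

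The heart of the argument is the progress invariant of \cite{PR14}: in each round, either the decider already certifies a spread-$c$ interval for $f(X)$ and we stop, or $|X_{i+1}| \le \gamma\,|X_i|$ for a fixed $\gamma < 1$, because the chosen scale $r_i$ forces either a net with few centers or a constant fraction of isolated points that {\fontfamily{cmtt}\selectfont DelFar} removes; $c \ge 64$ is exactly the threshold at which this accounting goes through in \cite{PR14,AEKP}. I would verify that replacing exact nets by \emph{approximate} ones does not break this: the additive $\epsilon d$ covering slack and the $(1+\epsilon)$ packing slack perturb the relevant counts and scales only by constant factors, which can be folded into $\gamma$ and into the decider's approximation constant, and -- this being the step that needs care -- one must re-confirm, via the ``nice distance problem'' hypothesis of \cite{PR14}, that the pruned and netted points cannot move $f(X)$ outside the reported interval even under these perturbations. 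Since $|X_i|$ contracts geometrically, after $O(\log n)$ rounds $|X_i| = O(1)$ and the interval is read off directly; correctness holds with probability $1 - o(1)$ by a union bound over the $O(\log n)$ rounds, each of which invokes our randomized {\fontfamily{cmtt}\selectfont ApprxNet}/{\fontfamily{cmtt}\selectfont DelFar} (which succeed with high probability) plus the decider, which is efficient for nice distance problems.

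For the running time I would charge round $i$ with one call each of the improved {\fontfamily{cmtt}\selectfont ApprxNet} and {\fontfamily{cmtt}\selectfont DelFar} on $|X_i| \le \gamma^i n$ points, plus $O(d\,|X_i|)$ for scale estimation and decider bookkeeping; by Theorems~\ref{r-net in l1},~\ref{l2 to l1} and the improved {\fontfamily{cmtt}\selectfont DelFar} theorem this is $\tilde{O}(d\,\gamma^i n + (\gamma^i n)^{2-\alpha})$ with $\alpha = \Omega(\epsilon^{1/3}/\log(1/\epsilon))$. Summing over $i \ge 0$ gives two convergent geometric series (since $\gamma < 1$ and $2-\alpha > 0$), so the total is $\tilde{O}(dn + n^{2-\alpha})$, dominated by the first round; fixing $\epsilon$ to a small enough absolute constant makes $\alpha$ exceed $10^{-6}$ and absorbs the polylogarithmic factors into $n^{1.999999}$, yielding the claimed $O(dn^{1.999999})$ bound for $c \ge 64$. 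The main obstacle is thus not the time analysis -- a routine geometric sum once the per-round cost is known -- but the bookkeeping of the second paragraph: faithfully reproducing the Net \& Prune contraction invariant and the correctness of the emitted interval when the nets are only approximate.
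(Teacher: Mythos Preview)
The paper does not prove this theorem; it is quoted from \cite{AEKP} and invoked as a black box for the applications that follow. There is therefore no proof in the paper to compare your proposal against.

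Your outline is a plausible reconstruction of the \cite{PR14,AEKP} argument --- the $O(\log n)$-round contraction invariant, the union bound over the randomized subroutine calls, and the geometric summation of per-round costs are the right ingredients --- and you go beyond the paper by observing that the improved {\fontfamily{cmtt}\selectfont ApprxNet} and {\fontfamily{cmtt}\selectfont DelFar} developed here slot into that structure unchanged and recover (indeed, strengthen) the bound. One minor remark: the $O(dn^{1.999999})$ figure in the stated theorem is the \emph{original} running time of \cite{AEKP}, obtained from their $\tilde O(dn^{2-\Theta(\sqrt{\epsilon})})$ net construction by fixing a small constant $\epsilon$; deriving it via the new $\tilde O(dn + n^{2-\alpha})$ subroutines is certainly legitimate, but it is not what the cited theorem itself does, and the paper does not claim otherwise.
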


\subsection{$k$th-Smallest Nearest Neighbor Distance}
When having a set of points in high dimensional space, we may be interested in finding the $k$-smallest nearest neighbor distance. This means, when looking at the set of distances to the nearest neighbor of each point, finding the $k$th-smallest of these. Computing this with a naive algorithm takes $O(dn^2)$, which is not suitable in high dimensional space. Alternatively, we are able to build a $(1+\epsilon)$-decider and then apply the Net $\&$ Prune framework to solve the problem. This has previously been done by \cite{AEKP}. Theorem \ref{r-net in l1} and Theorem \ref{l2 to l1} yield immediate improvement on the runtime of the decider, as it is built with help of {\fontfamily{cmtt}\selectfont DelFar}. We thus omit the proof of the following Theorem here. The proof can be found in the supplementary material.

\begin{restatable}[]{theorem}{kthdist}\label{kth nn dist}
For a set of points $X$, $\epsilon\gg \frac{4\log^6(\log{n})}{\log^3{n}}$ and the norm $\|\cdot\|$, that denotes the $\ell_1$ or $\ell_2$ norm, with high probability, we can find the $(1+\epsilon)$-approximate $k$-smallest nearest neighbor distance of $X$ in time $\tilde{O}(dn + n^{2-\alpha})$, where $\alpha = \Omega(\frac{\epsilon^{\frac{1}{3}}}{\log(1/\epsilon)})$.
\end{restatable}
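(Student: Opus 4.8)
The plan is to follow the Net \& Prune recipe: exhibit an efficient $(1+\epsilon)$-decider for the function $f$ that maps $X$ to its $k$th-smallest nearest neighbor distance, plug it into the framework together with the improved \texttt{DelFar} and \texttt{ApprxNet} routines, and combine the resulting constant-spread interval with a binary search over the $(1+\epsilon)$-decider. The nontrivial content is entirely in the construction and analysis of the decider; the rest is a mechanical invocation of the previous theorems.

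First I would recall the structure of the decider used by \cite{AEKP}. Given a query radius $r$, run the improved \texttt{DelFar} of the previous theorem on $X$ with radius $r$ and error parameter $\epsilon$, obtaining a set $F$. The guarantees of \texttt{DelFar} are: every point whose nearest neighbor is at distance $>(1+\epsilon)r$ is excluded from $F$, and every point whose nearest neighbor is at distance $\le r$ is included in $F$. Hence if $|F| < k$, then fewer than $k$ points have a neighbor within $r$, so $f(X) > r$; if $|F| \ge k$ after the run with radius $r$, we still cannot immediately conclude, so we also run \texttt{DelFar} with radius $r/(1+\epsilon)$ (or, symmetrically, compare the counts at $r$ and at $(1+\epsilon)r$). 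The three output cases of the decider then fall out: if the count of ``non-isolated within $r$'' points is $\ge k$ but the count of ``non-isolated within $r/(1+\epsilon)$'' points is $<k$, the $k$th value lies in $[r/(1+\epsilon), (1+\epsilon)r]$, i.e. in an interval of the required multiplicative width, and we return it as case (i); otherwise we return case (ii) or (iii) as above. Each call to \texttt{DelFar} costs $\tilde O(dn + n^{2-\alpha})$ by the previous theorem, so the decider runs within the same bound.

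Next I would assemble the pieces. By the cited Net \& Prune theorem, running the framework (with the improved \texttt{ApprxNet} from Theorems~\ref{r-net in l1} and \ref{l2 to l1}) produces, with probability $1-o(1)$, a constant-spread interval $[\ell, u]$ with $u/\ell = O(1)$ that contains $f(X)$; this step also costs $\tilde O(dn + n^{2-\alpha})$ since \texttt{ApprxNet} now does. Within a constant-spread interval, $O(\log(1/\epsilon))$ invocations of the $(1+\epsilon)$-decider — a geometric sweep of candidate radii $\ell(1+\epsilon)^i$ — suffice to locate $f(X)$ up to a $(1+\epsilon)$ factor, and $\log(1/\epsilon)$ is hidden by the $\tilde O$ notation. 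Multiplying the per-call cost by this number of calls keeps the total at $\tilde O(dn + n^{2-\alpha})$ with $\alpha = \Omega(\epsilon^{1/3}/\log(1/\epsilon))$. The $\ell_2$ case is handled, as in Theorem~\ref{l2 to l1}, by first applying the Johnson–Lindenstrauss-style $\ell_2 \to \ell_1$ embedding and then running the $\ell_1$ argument; the extra distortion is absorbed into the constant in front of $\epsilon$, which is why the hypothesis is stated with a factor $4$ in $\epsilon \gg 4\log^6(\log n)/\log^3 n$.

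The main obstacle is verifying that the decider is genuinely a $(1+\epsilon)$-decider despite the \emph{two-sided} looseness of \texttt{DelFar}: \texttt{DelFar} only distinguishes ``neighbor within $r$'' from ``neighbor beyond $(1+\epsilon)r$,'' so the count $|F|$ is a monotone but not exact proxy for the number of points whose nearest-neighbor distance is below a threshold. I would need to argue carefully that the gap between the count at radius $r/(1+\epsilon)$ and the count at radius $r$ brackets the $k$th order statistic correctly — in particular, that when neither strict inequality on the counts holds, the returned interval $[\beta,(1+\epsilon')\beta]$ with $\beta = r/(1+\epsilon)$ actually contains $f(X)$ with the claimed relative width after rescaling $\epsilon' = \Theta(\epsilon)$. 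A secondary point is bookkeeping the union bound over all decider calls and over the internal randomness of \texttt{DelFar} and \texttt{ApprxNet} so that the overall success probability remains ``with high probability,'' which is routine given that the number of calls is polylogarithmic. Since this is precisely the decider of \cite{AEKP} with our faster \texttt{DelFar} substituted in, I would cite their correctness analysis and only re-examine the parts where the running-time improvement could interact with the probability bounds.
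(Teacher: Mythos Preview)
Your proposal is correct and follows essentially the same approach as the paper's proof: two calls to the improved \texttt{DelFar} at radii $r/(1+\epsilon')$ and $r$ give the three-way decider, the Net \& Prune framework then yields a constant-spread interval, and a geometric binary search with the decider inside that interval finishes. The paper makes the rescaling you anticipate explicit by taking $\epsilon' = \epsilon/4$, so the returned interval $[r/(1+\epsilon/4),(1+\epsilon/4)r]$ already has width $(1+\epsilon/4)^2 \le 1+\epsilon$; otherwise the argument and its cost accounting match yours.
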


\subsection{Min-Max Clustering}
To understand the following problem, we first define Upward Closed Set Systems and Sketchable Families, as introduced in \cite{PR14}.
\begin{definition}[Upward Closed Set System \& Sketchable Families \cite{PR14}]
Let $P$ be a finite ground set of elements, and let $\mathcal{F}$ be a family of subsets of $P$. Then $(P,\mathcal{F})$ is an \textit{upward closed set system} if for any $X\in \mathcal{F}$ and any $Y\subset P$, such that $X\subset Y$, we have that $Y\in\mathcal{F}$. Such a set system is a \textit{sketchable family}, if for any set $S\subset P$ there exists a constant size \textit{sketch} sk($S$) such that the following hold.
\begin{enumerate}
\item For any $S,T\subset P$ that are disjoint, sk($S\cup T$) can be computed from sk($S$) and sk($T$) in $O(1)$ time. We assume the sketch of a singleton can be computed in $O(1)$ time, and as such the sketch of a set $S\subset P$ can be computed in $O(|S|)$.
\item There is a membership oracle for the set system based on the sketch. That is, there is a procedure orac such that given the sketch of a subset sk($S$), orac returns whether $S\in \mathcal{F}$ or not, in $O(1)$ time.
\end{enumerate}
\end{definition}
Min-Max Clustering is a method of clustering sets of the Upward Closed Set Systems within Sketchable Families under some cost function. The following is a formal definition of Min-Max clustering, as provided by \cite{PR14}.
\begin{definition}[Min-Max Clustering \cite{PR14}]
We are given a sketchable family $(P,\mathcal{F})$, and a cost function $g:2^P\rightarrow\mathbb{R}^+$. We are interested in finding disjoint sets $S_1,...,S_m\in \mathcal{F}$, such that $(i) \bigcup_{i=1}^{m} S_i = P$, and $(ii) \max_i g(S_i)$ is minimized. We will refer to the partition realizing the minimum as the \textit{optimal clustering} of $P$.
\end{definition}
We later resort to the following Lemma when building a $(1+\epsilon)$-decider for a concrete instance of Min-Max Clustering.
\begin{lemma}\label{2r balls}
Given a set of $n$ points $X\subset\mathbb{R}^d$, a radius $r\in\mathbb{R}$, some error parameter $\epsilon\gg \frac{\log^6(\log{n})}{\log^3{n}}$, the norm $\|\cdot\|$, that denotes the $\ell_1$ or $\ell_2$ norm, and a set $C\subset X$ s.t. $\forall x,y\in C,  \|x-y\|\geq 2r(1+\epsilon)$,  with high probability, we can return sets $P_i$, such that $\forall c_i \in C,\forall x \in P_i, \|c_i-x\|\leq (1+\epsilon)r$ and $\forall x\in X\cap B_r(c_i)$, where $B_r(c_i)=\{x: x\in\mathbb{R}^d, \|x-c_i\|\leq r\}$ we have that $x\in P_i$ in time $\tilde{O}(dn+n^{2-\alpha})$, where $\alpha = \Omega(\frac{\epsilon^{\frac{1}{3}}}{\log(1/\epsilon)})$.
\end{lemma}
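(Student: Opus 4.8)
The plan is to mirror the reduction pipeline of Theorems~\ref{r-net in l1} and~\ref{l2 to l1}: first pass from the Euclidean metric to Hamming space, then solve an additive version of the problem there with the distance‑matrix machinery of Theorem~\ref{distance matrix}. If $\|\cdot\|$ is the $\ell_2$ norm, I would first apply the $\ell_2\to\ell_1$ Johnson--Lindenstrauss‑type embedding used in the proof of Theorem~\ref{l2 to l1}, which runs in $\tilde O(dn)$ time and, with high probability, preserves all pairwise distances up to a $(1+O(\epsilon))$ factor; absorbing this distortion into $\epsilon$, it suffices to treat the $\ell_1$ case. There I apply the locality‑sensitive map $f:\mathbb{R}^d\to\{0,1\}^k$ from the proof of Theorem~\ref{r-net in l1} with $k=O(\epsilon^{-2}\log n)=\tilde O(1)$: taking the hidden constant in $k$ large enough and union‑bounding over all $O(n^2)$ pairs, with high probability we have simultaneously for all $p,q\in X$ that $\|p-q\|_1\le r\Rightarrow\|f(p)-f(q)\|_1\le A_0$ and $\|p-q\|_1\ge(1+\epsilon)r\Rightarrow\|f(p)-f(q)\|_1\ge A_1$, with $A_1-A_0=\Omega(k\epsilon)$. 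It then remains, in Hamming space, to produce for each $c\in C$ a set $P_c$ with $\{x\in X:\|f(x)-f(c)\|_1\le A_0\}\subseteq P_c\subseteq\{x\in X:\|f(x)-f(c)\|_1<A_1\}$.

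Next I would build a bipartite distance matrix exactly as in Theorem~\ref{distance matrix}, but with rows indexed by a partition $S_1,\dots,S_{|C|^{1-\alpha}}$ of $C$ into parts of size $|C|^{\alpha}$ and columns indexed by the points of $X$ (if $|C|\le n^{1-\alpha}$ one instead brute‑forces all $|C|\cdot n$ pairs in $\tilde O(n^{2-\alpha})$ time). Feed the PTF of Theorem~\ref{prob. PTF for OR} the expression ``some center in $S_i$ agrees with $x_j$ in at least $k-A_0$ coordinates'', with additive parameter $\epsilon'$ chosen so that $A_0+\epsilon'k\le A_1-1$; this keeps $\epsilon'=\Omega(\epsilon)$, which in particular meets the lower bound on the additive error required by Theorem~\ref{distance matrix} once $k=\tilde O(1)$. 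Running the same rectangular‑matrix‑multiplication block decomposition as in Theorem~\ref{distance matrix} --- now with $N=|C|^{1-\alpha}$ and $n/|C|^{1-\alpha}$ column blocks of cost $|C|^{2-2\alpha}\cdot\mathrm{poly}(\log n)$ each, the part size $|C|^{\alpha}$ and degree $\Delta$ keeping the inner dimension below $N^{0.172}$ so that Lemma~\ref{matmult copper} applies --- produces $W$ in time $\tilde O(n\cdot|C|^{1-\alpha})=\tilde O(n^{2-\alpha})$, with $\min_{c\in S_i}\|f(x_j)-f(c)\|_1\le A_0\Rightarrow W_{ij}>2|S_i|$ and $\min_{c\in S_i}\|f(x_j)-f(c)\|_1>A_0+\epsilon'k\Rightarrow|W_{ij}|\le|S_i|$.

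Finally I would scan $W$ column by column; whenever $W_{ij}>2|S_i|$, compute $\|f(x_j)-f(c)\|_1$ for every $c\in S_i$ in $O(k)$ time each and add $x_j$ to $P_c$ whenever $\|f(x_j)-f(c)\|_1<A_1$. Correctness is immediate: such a $c$ has $\|x_j-c\|\le(1+\epsilon)r$ w.h.p., giving $P_c\subseteq B_{(1+\epsilon)r}(c)$, while any $x$ with $\|x-c\|\le r$ maps to $\|f(x)-f(c)\|_1\le A_0<A_1$, which flags the part containing $c$ and then gets $x$ added to $P_c$. The separation hypothesis enters only in the running time: a flag $W_{ij}>2|S_i|$ forces some $c\in S_i$ with $\|f(x_j)-f(c)\|_1\le A_0+\epsilon'k<A_1$, hence (w.h.p.) $\|x_j-c\|<(1+\epsilon)r$; by the triangle inequality and $\|c-c'\|\ge 2r(1+\epsilon)$ for distinct centers, at most one center of $C$ can be this close to $x_j$, so each column of $W$ has at most one flagged entry. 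Post‑processing therefore touches at most $n$ flagged entries at cost $O(|C|^{\alpha}k)=\tilde O(n^{\alpha})$ each, i.e. $\tilde O(n^{1+\alpha})$, plus $\tilde O(n^{2-\alpha})$ to read $W$; together with $\tilde O(dn)$ for the embedding(s) and $\tilde O(n^{2-\alpha})$ to build $W$, the total is $\tilde O(dn+n^{2-\alpha})$, and a union bound over the embedding, the locality‑sensitive map, and Theorem~\ref{distance matrix} gives the whole guarantee with high probability.

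The main obstacle is twofold. First, Theorem~\ref{distance matrix} is stated for a square distance matrix whose rows and columns come from the same point set, so I would have to re‑examine its rectangular‑multiplication accounting in the bipartite regime and verify that partitioning the \emph{smaller} set $C$ still yields $\tilde O(n^{2-\alpha})$; this is the block decomposition above, and it goes through because $|C|\le n$. Second --- the conceptual point --- the $r$‑net construction needed the sparsification of Theorem~\ref{brute force} to bound the number of ``close'' distance‑matrix entries, but that step cannot be reused here, since deleting a point near an arbitrary pivot could discard a point that must lie in some $P_c$. The separation of $C$ is exactly what replaces it: it caps the number of flagged entries per column at one, provided the Hamming additive error $\epsilon'$ is taken slightly below $(A_1-A_0)/k$ so that a flag certifies a \emph{strict} $\ell_1$‑distance below $(1+\epsilon)r$ --- which is precisely what the triangle‑inequality argument consumes.
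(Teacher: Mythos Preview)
Your argument is correct, but it takes a genuinely different route from the paper's. The paper follows its $r$-net template almost verbatim: it first runs the brute-force sparsification of Theorem~\ref{brute force} over centers in $C$ (picking $\sqrt{n}$ random $c_i\in C$ and deleting all points of $X'$ within $r'+\epsilon k$ of each), then builds the \emph{square} distance matrix of Theorem~\ref{distance matrix} on the remaining $X'$ and scans only the columns corresponding to surviving centers. The sparsification is what bounds the number of close $(c,x)$ pairs by $n^{1.7}$, giving an $\tilde O(n^{1.7+\alpha})$ post-processing term. You instead transpose the roles --- partitioning $C$ on the rows and putting all of $X$ on the columns --- and use the separation hypothesis $\|c-c'\|\ge 2r(1+\epsilon)$ directly to show that every column can have at most one flagged block. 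This replaces the sparsification step entirely and trades the $n^{1.7+\alpha}$ post-processing for $\tilde O(n^{1+\alpha})$.

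Each approach buys something. The paper's version reuses Theorem~\ref{distance matrix} as a black box, at the cost of a randomized preprocessing step whose analysis is only sketched (``a similar argument as in Theorem~\ref{brute force}''). Your version is conceptually cleaner --- the separation of $C$ is exactly what controls the number of flags, with no auxiliary sparsification --- but requires re-checking the rectangular-multiplication bookkeeping for a genuinely bipartite matrix with $|C|^{1-\alpha}$ rows. You flag this yourself; the only thing to be careful about is that the monomial bound needs $\log|C|=\Theta(\log n)$, which your case split $|C|\gtrless n^{1-\alpha}$ guarantees. Alternatively, one can partition $C$ into blocks of size $n^{\alpha}$ (rather than $|C|^{\alpha}$) and reuse the paper's monomial estimate verbatim, padding rows if necessary; the matrix-multiplication cost is still $\tilde O(n\cdot |C|/n^{\alpha})\le\tilde O(n^{2-\alpha})$.
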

\begin{proof}
We reduce the problem to Hamming space with error $\epsilon$ and radius $r$, applying the same techniques as in Theorem \ref{r-net in l1} for $\ell_1$ points or Theorem \ref{l2 to l1} for points in $\ell_2$. After this reduction, we get a set of new points $X'$ and a new radius $r'$. We apply brute force on $X'$ to get part of the solution. We randomly choose a point of $c_i\in C$ and then iterate over every point in $x\in X'$. We check if $\|x-c_i\|\leq r'+\epsilon k$, for $k=(\epsilon^{-2}\log{n})$ the dimension in Hamming space. For every point where this holds, we the original point into the set $P_i$ and then delete $x$ from $X'$. We do this $\sqrt[]{n}$ times which takes $\tilde{O}(n^{1.5})$ time in total. Without loss of generality, we now assume that $|C|>\sqrt[]{n}$, as otherwise we would be done at this point. With a similar argument as in Theorem \ref{brute force}, we argue that, after using brute force, with probability at least $1-n^{-0.2}$, $|\{(c,x)|c\in C, x\in X', \|x-c\|\leq r'+\epsilon k\}|\leq n^{1.7}$. As in Theorem \ref{hamming r-net}, we then build the distance matrix $W$ of the remaining points in $X'$. We iterate over every column corresponding to remaining center points $c_j$. For every entry $W_{i,j} > 2|S_i|$, we add original version of every point $x\in S_i$ such that $\|x-c_j\|\leq r' +k\epsilon$ to $P_j$. This takes time $\tilde{O}(n^{2-\alpha}+n^{1.7})$. It then holds that $\forall c_i\in C, \forall x \in P_i, \|c_i-x\|\leq (1+\epsilon)r$, as we only added points to $P_i$'s, where this property holds. It also holds that $\forall x\in X\cap B_r(c_i)$, as every point is only within the ball of one single center, because of the constraint on $C$. As we only do brute force and then build a distance matrix which we iterate through in a similar fashion as in Theorem \ref{hamming r-net}, the runtime is as stated.
\end{proof}

The proof of the following Theorem describes how to utilize the above Lemma to build a decider. The framework then allows us to solve a concrete instance of Min-Max Clustering. A similar decider was built by \cite{PR14}, to solve the same problem in low dimensional space.
\begin{theorem}
Let $P\subset\mathbb{R}^d$, let $(P,\mathcal{F})$ be a sketchable family and let $\|\cdot\|$ be the norm, that denotes the $\ell_1$ or $\ell_2$ norm. For a set $W\in\mathcal{F}$, let $r_{min}(W)$ be the smallest radius, such that a ball centered at a point of $W$ encloses the whole set. We can then, for $\epsilon\gg \frac{4\log^6(\log{n})}{\log^3{n}}$,  $(4+\epsilon)$-approximate the min-max clustering of $P$ with $r_{min}(W)$ as the cost function, with high probability, in time $\tilde{O}(dn+n^{2-\alpha})$, where $\alpha = \Omega(\frac{\epsilon^{\frac{1}{3}}}{\log(1/\epsilon)})$.

Specifically, one can cover $P$ by a set of balls and assign each point of $P$ to a ball containing that point, such that the set of assigned point of each ball is in $\mathcal{F}$ and the maximum radius of these balls is a $(4+\epsilon)$-approximate of the minimum of the maximal radius used by any such cover.
\end{theorem}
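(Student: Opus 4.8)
The plan is to follow the three‑phase recipe of \cite{PR14}: (1) build a decider for the min‑max clustering cost $r^\ast:=f(P)$ (the optimum of $\max_i r_{min}(S_i)$ over partitions of $P$ into members of $\mathcal F$); (2) run the (now fast) Net \& Prune framework to obtain a constant‑spread interval $[a,b]$ containing $r^\ast$; (3) probe $[a,b]$ with the machinery behind the decider to extract an explicit $(4+\epsilon)$‑approximate clustering. All geometric work is delegated to Lemma~\ref{2r balls} and Theorems~\ref{r-net in l1} and~\ref{l2 to l1}; for an $\ell_2$ input I would first invoke the reduction of Theorem~\ref{l2 to l1}, so it suffices to argue in $\ell_1$.

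The engine is a routine that, given a guess $\hat r$ for $r^\ast$, either produces a valid clustering of cost $\le(4+O(\epsilon))\hat r$ or certifies $r^\ast>\hat r$. Using Theorem~\ref{r-net in l1}, build a $(1+\epsilon)$‑approximate $\rho$‑net $C=\{c_1,\dots,c_k\}\subseteq P$ with $\rho:=4\hat r(1+\epsilon)$, so the $c_i$ are pairwise $\ge 4\hat r(1+\epsilon)$ apart (packing) and every point of $P$ lies within $4\hat r(1+\epsilon)^2$ of some $c_i$ (covering). Apply Lemma~\ref{2r balls} to $C$ with its radius parameter set to $2\hat r$ — the required separation $\|x-y\|\ge 2(2\hat r)(1+\epsilon)$ is exactly the packing of $C$ — to get disjoint sets $P_i$ with $B_{2\hat r}(c_i)\cap P\subseteq P_i\subseteq B_{2\hat r(1+\epsilon)}(c_i)$. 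Route every point of $P$ not captured by any $P_i$ to its nearest center, producing $Q_i\supseteq P_i$ that partition $P$. Since $c_i\in P_i\subseteq Q_i$ and $Q_i\subseteq B_{4\hat r(1+\epsilon)^2}(c_i)$, once $Q_i\in\mathcal F$ we have $r_{min}(Q_i)\le(4+O(\epsilon))\hat r$. The feasibility test is: compute each $\mathrm{sk}(Q_i)$ in $O(|Q_i|)$ time and call the membership oracle; declare \emph{success} iff every $Q_i\in\mathcal F$.

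The correctness claim to establish is that $\hat r\ge r^\ast$ forces success. Given an optimal clustering $S_1^\ast,\dots,S_m^\ast$, each $S_j^\ast$ has diameter $\le 2r^\ast\le 2\hat r<4\hat r(1+\epsilon)$, hence contains at most one net center; and if $c_i\in S_j^\ast$ then $S_j^\ast\subseteq B_{2r^\ast}(c_i)\cap P\subseteq B_{2\hat r}(c_i)\cap P\subseteq P_i\subseteq Q_i$, so $Q_i$ contains the member $S_j^\ast$ of $\mathcal F$ and, by upward‑closedness, $Q_i\in\mathcal F$. Contrapositively, a failure certifies $r^\ast>\hat r$. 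Feeding this test to the improved Net \& Prune framework (in which \texttt{ApprxNet} is replaced by the $r$‑net algorithms of Theorems~\ref{r-net in l1} and~\ref{l2 to l1}, and \texttt{DelFar} by its improved version) yields, in $\tilde O(dn+n^{2-\alpha})$ time, a constant‑spread interval $[a,b]\ni r^\ast$; I would then run the engine at the $O(\epsilon^{-1})$ geometric scales $a,a(1+\epsilon),a(1+\epsilon)^2,\dots$ inside $[a,b]$ and output $\{Q_i\}$ (balls $B_{r_{min}(Q_i)}(c_i)$) from the smallest succeeding scale $\hat r_0$. The previous scale failed, so $\hat r_0\le(1+\epsilon)r^\ast$; since no valid clustering has cost below $r^\ast$, the output has cost in $[r^\ast,(4+O(\epsilon))\hat r_0]\subseteq[r^\ast,(4+\epsilon)r^\ast]$ after rescaling $\epsilon$ (the hypothesis $\epsilon\gg 4\log^6(\log n)/\log^3 n$ leaves room for the constantly many rescalings). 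Each engine call is $\tilde O(dn+n^{2-\alpha})$ by Theorem~\ref{r-net in l1}, Lemma~\ref{2r balls}, plus $O(n)$ for the sketches; there are only $\tilde O(1)$ calls, so the overall bound holds, and a union bound preserves the high‑probability guarantee.

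I expect the main obstacle to be the propagation of the multiplicative $\epsilon$‑slacks: choosing the net scale so that its packing exactly meets the separation hypothesis of Lemma~\ref{2r balls}, and then verifying that the covering slack of the approximate net, the $(1+\epsilon)$ blow‑up inside Lemma~\ref{2r balls}, and the geometric‑grid rounding compound only to a $(1+\epsilon)$ factor on top of $4$ — rather than corrupting the constant (e.g. to $5r^\ast$) — together with routing the leftover far points so that no produced cluster is center‑less or leaves $\mathcal F$. A secondary subtlety is that the decider carries a factor‑$4$ gap rather than being a genuine $(1+\epsilon)$‑decider; this is harmless here because Net \& Prune only needs a constant‑spread interval and the tight approximation is recovered in the final grid search, exactly as in the low‑dimensional argument of \cite{PR14}.
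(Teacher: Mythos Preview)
Your proposal is correct and follows essentially the same approach as the paper: build a net at scale $\approx 4\hat r$, use Lemma~\ref{2r balls} at scale $2\hat r$ to harvest the balls, test membership of each ball in $\mathcal F$ via the sketches, then plug the resulting gap‑tester into Net \& Prune and finish with a geometric grid search over the returned constant‑spread interval. The one point to tighten is the routing of leftover points to their ``nearest center'': done literally this could cost $O(dn\,|C|)$, so you should instead use the covering assignment that the approximate $r$‑net construction already produces (each deleted point knows a center within $(1+\epsilon)\rho$), which keeps the step within $\tilde O(dn+n^{2-\alpha})$; with that, your argument is in fact slightly more complete than the paper's, which returns only the approximate value and does not spell out how the clustering itself is output.
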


\begin{proof}
First notice that, when building a $(1+\epsilon)$-approximate $(4r_{opt}(1+\epsilon))$-net, where $r_{opt}$ is the radius of the optimal clustering $\mathcal{P}_{opt}$ of $P$, the following properties hold. Let $W_i\in\mathcal{P}_{opt}$ be the cluster that contains center $c_i$ of the $r$ net. It then holds that $diam(W_i)\leq 2r_{opt}$. Also any two center points of the net have distance at least $(4r_{opt}(1+\epsilon))$ from each other, thus there are no $i\neq j$ such that $W_i = W_j$. Now define $C_i$ as the set of points that are contained in a ball of radius $2r_{opt}$ around center $c_i$, hence $C_i = P\cap B_{2r_{opt}}(c_i)$. It then holds that $W_i\subset C_i$ and since $W_i\in \mathcal{P}_{opt}$, we know that $W_i\in\mathcal{F}$. Thus $C_i\in\mathcal{F}$ by definition of upward closed set systems.

This observation allows us to build a decider for $f(P,\mathcal{F})$, which is the function returning the optimal solution to the objective of the clustering. First, we build a $(1+\frac{\epsilon}{4})$-approximate $(4r(1+\frac{\epsilon}{4}))$-net. We then apply Lemma \ref{2r balls} on inputs $X=P$,$r=2r$, $\epsilon=\frac{\epsilon}{4}$ and $C=C$ , where $C$ is the set of center points of the net. It is easy to see, that the needed property on $C$ is met,as it contains the center points of a $(4r(1+\frac{\epsilon}{4}))$-net. The sets $P_i$, that get returned by Lemma \ref{2r balls}, are then supersets of $C_i$, if $r \geq r_{opt}$. From the definition of sketchable families we know that, for every $P_i$, we are able to decide if $P_i\in\mathcal{F}$ in $O(n)$. Assume now that there exists a $P_i$ which is not in $\mathcal{F}$. $P_i$ is thus not a superset of $C_i$, and we return $r < r_{opt}$. Otherwise, we know that $\forall i,C_i \subset P_i$ and thus $r \geq r_{opt}$. Now its left to decide if $r_{opt}$ is within some constant spread interval. To that end, we repeat the process above, but for a slightly smaller net, say a $(1+\frac{\epsilon}{4})$-approximate $4r$-net. If all of the $P_i$'s for this new net are in $\mathcal{F}$, we know that the original $r$ was to big and we return $f(P,\mathcal{F})<r$. Otherwise, because we applied Lemma \ref{2r balls} to radius $\frac{2r}{(1+\frac{\epsilon}{4})}$ and found that balls of that radius centered at $c_i$ are not in $\mathcal{F}$, we know that the optimal value is at least $\frac{r}{(1+\frac{\epsilon}{4})}$, dew to our observation about the diameter of the clusters in the beginning. We also know that $r_{opt}$ can be as big as $4(1+\frac{\epsilon}{4})^2r$ as we are able to cover the whole space with balls of such radius and subsets of these are in $\mathcal{F}$. Therefor, we return the interval $[\frac{r}{1+\frac{\epsilon}{4}}, 4(1+\frac{\epsilon}{4})^3\frac{r}{1+\frac{\epsilon}{4}}]$. Plugging this into the framework thus provides us with a constant spread interval, which contains the solution. By searching the interval the same way as in Theorem \ref{kth nn dist}, we end up with an interval $[\frac{r}{1+\frac{\epsilon}{4}}, 4(1+\frac{\epsilon}{4})^3\frac{r}{1+\frac{\epsilon}{4}}]$. We return $\frac{r}{1+\frac{\epsilon}{4}}$, which is a $(4+\epsilon)$-approximate solution since the real solution may be up to a $4(1+\frac{\epsilon}{4})^3$-factor off and $4(1+\frac{\epsilon}{4})^3=4((\frac{\epsilon}{4})^3+3 (\frac{\epsilon}{4})^2+3\frac{\epsilon}{4}+1) \leq (4+\epsilon)$.
In the worst case, the decider builds an approximate $r$-net twice and also calls Lemma \ref{2r balls} twice. Applying the framework with that decider and searching the returned interval thus results in $\tilde{O}(dn+n^{2-\alpha})$.
\end{proof}

\subsection{$k$-center}
The $k$-center clustering is tightly coupled to the problem of building $r$-nets. For a set of high dimensional points, we want to find $k$ clusters, that minimize the maximum diameter of any of these. For any $\epsilon>0$, computing a $(2-\epsilon)$ approximate $k$-center clustering in polynomial time has been shown to be impossible except $P=NP$ \cite{HSU}. We thus focus on computing $(2+\epsilon)$-approximates of the optimal solution. In the following we present two approaches to this. First, we build a decider, such that we are able to employ the framework, which provides us with a $(4+\epsilon)$-approximate $k$-center clustering. We then exhibit a different approach to the problem. Instead of relying on the framework, we derive an algorithm that computes approximate greedy permutations. We then present a way of reducing the computation of a $(2+\epsilon)$-approximate $k$-center clustering to building approximate greedy permutations. The drawback of this approach is, that the runtime has a logarithmic dependency on the spread of the data.

\subsubsection{$(4+\epsilon)$ approximate $k$-center}
As in previous subsections, we design a decider which then gets called by the framework. The construction is similar to \cite{PR14}, where they construct a decider to $(4+\epsilon)$-approximate $k$-center clustering in lower dimensions.

We first proof the following Lemma, which is going to be useful later.

\begin{restatable}{lemma}{kcenterlemma}
\label{k-center/r-net lemma}
There are the following relations between a set $C$, which contains the net points of a $(1+\epsilon)$-approximate $r$-net on a set of points $X$, and the function $f(X,k)$, which returns the optimal clustering radius for the $k$-center problem on the set $X$.
\begin{enumerate}
\item If $|C| \leq k$ then $f(X) < (1+2\epsilon)r$
\item If $|C| > k$ then $r \leq 2f(X)$
\end{enumerate}
\end{restatable}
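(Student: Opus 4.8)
The plan is to prove both implications directly from the covering and packing properties of the $(1+\epsilon)$-approximate $r$-net $C$, together with the definition of the optimal $k$-center radius $f(X)=f(X,k)$. Recall that $C\subset X$ satisfies: (packing) every two distinct points of $C$ are at distance at least $r$; and (covering) every point of $X$ lies within distance $(1+\epsilon)r$ of some point of $C$ --- actually, since we want the cleaner bound $(1+2\epsilon)r$, I should be careful about whether the multiplicative error sits on the radius or is absorbed; I will track the constant from whichever formulation of Theorem~\ref{r-net in l1}/\ref{l2 to l1} is being invoked and round up to $(1+2\epsilon)r$ at the end.

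\textbf{Part 1: if $|C|\le k$ then $f(X)<(1+2\epsilon)r$.} Here I would simply exhibit a clustering using the net points as centers. Assign each point $x\in X$ to a nearest net point in $C$; by the covering property this induces a partition of $X$ into at most $|C|\le k$ clusters, each of radius at most $(1+\epsilon)r$ (or $(1+2\epsilon)r$ depending on the precise error bookkeeping). Hence the optimal $k$-center radius, which is the minimum over all $k$-clusterings of the maximum cluster radius, is at most $(1+2\epsilon)r$; strictness comes from the slack between $(1+\epsilon)$ and $(1+2\epsilon)$. One subtlety: $k$-center is usually phrased with centers allowed anywhere (or restricted to $X$); since $C\subset X$ the construction works in either convention, so this direction is essentially immediate.

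\textbf{Part 2: if $|C|>k$ then $r\le 2f(X)$.} This is the contrapositive-style argument and the more interesting one. Suppose $|C|>k$, so $C$ contains at least $k+1$ points that are pairwise at distance $\ge r$ (packing). Now take any optimal $k$-center clustering of $X$ with radius $\rho:=f(X)$: it has at most $k$ clusters, each contained in a ball of radius $\rho$. By pigeonhole, two distinct net points $p,q\in C$ fall into the same cluster, so $\|p-q\|\le 2\rho$ by the triangle inequality through that cluster's center. Combining with packing, $r\le\|p-q\|\le 2\rho=2f(X)$, as claimed.

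\textbf{Main obstacle.} There is no deep obstacle --- both parts are short pigeonhole/triangle-inequality arguments. The only thing requiring care is the constant in Part 1: the statement claims $(1+2\epsilon)r$ while the net gives covering radius $(1+\epsilon)r$, so I need to make sure the looser constant is genuinely justified (it is, trivially, since $(1+\epsilon)r<(1+2\epsilon)r$), and to confirm that the $\epsilon$ appearing in the lemma is the same $\epsilon$ governing the approximate $r$-net rather than a rescaled one. I would also state explicitly which metric/convention for $k$-center is in force so that "ball of radius $\rho$ around a center" is unambiguous, but no real work is hidden there.
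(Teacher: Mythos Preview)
Your proposal is correct and matches the paper's own proof essentially line for line: Part~1 via the covering property (every point within $(1+\epsilon)r$ of a net point, hence $f(X)\le(1+\epsilon)r<(1+2\epsilon)r$), and Part~2 via packing plus pigeonhole (at least $k+1$ net points pairwise $\ge r$ apart cannot be covered by $k$ balls of radius $<r/2$). Your extra care about the constant and the $k$-center convention is fine but not needed beyond what you already wrote.
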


\begin{theorem}
For a set of $n$ points $X\in\mathbb{R}^d$, some integer $k$, $n \geq k > 0$, some error parameter $\epsilon\gg \frac{32\log^6(\log{n})}{\log^3{n}}$ and the norm $\|\cdot\|$, that denotes the $\ell_1$ or $\ell_2$ norm, with high probability, we return a $(4+\epsilon)$-approximate $k$-center clustering in time $\tilde{O}(dn+n^{2-\alpha})$, where $\alpha = \Omega(\frac{\epsilon^{\frac{1}{3}}}{\log(1/\epsilon)})$.
\end{theorem}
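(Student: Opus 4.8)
The plan is to build an efficient decider for the optimal $k$-center radius $f(X,k)$ out of Lemma~\ref{k-center/r-net lemma}, hand it to the Net \& Prune framework to get a constant‑spread interval containing $f(X,k)$, and then read an explicit clustering off one more approximate $r$-net — the same pipeline used above for Min-Max clustering and for the $k$th‑smallest nearest neighbor distance. As everywhere in this section, I would prove it for $\ell_1$ and obtain the $\ell_2$ case for free by composing with the mapping of Theorem~\ref{l2 to l1}, so that the only distance‑altering reductions ($\ell_2\to\ell_1\to$ Hamming) are the ones already bundled into Theorems~\ref{r-net in l1} and~\ref{l2 to l1}.

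For the decider: on input a radius $r$, invoke Theorem~\ref{r-net in l1} (resp.\ Theorem~\ref{l2 to l1}) to build, with high probability, a $(1+\epsilon')$-approximate $r$-net $C$ of $X$, where $\epsilon'=\Theta(\epsilon)$ is a small enough multiple of $\epsilon$. By Lemma~\ref{k-center/r-net lemma}, $|C|\le k$ forces $f(X,k)<(1+2\epsilon')r$, while $|C|>k$ forces $f(X,k)\ge r/2$. To turn this dichotomy into the three outcomes of a $(1+\epsilon)$-decider I would, in the style of the Min-Max proof, also build nets at a constant‑factor rescaled radius: a net at radius $\approx 2r$, so that observing $|C|>k$ there certifies $f(X,k)>r$ (outcome (iii)); and a net at radius $\approx r/2$, so that observing $|C|\le k$ there certifies $f(X,k)<r$ (outcome (ii)); and otherwise the two bounds together confine $f(X,k)$ to an interval of constant spread around $\Theta(r)$ (outcome (i)). Each decider call performs $O(1)$ net constructions, hence runs in $\tilde{O}(dn+n^{2-\alpha})$, and the error parameter $\epsilon'$ propagates as required, so this is an efficient decider in the sense of \cite{AEKP}.

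Next I would run the Net \& Prune framework (the cited \cite{AEKP} theorem, with $c\ge 64$) with this decider, obtaining with probability $1-o(1)$ a constant‑spread interval $[a,b]$ with $f(X,k)\in[a,b]$ in time $O(dn^{1.999999})$. I then search $[a,b]$ exactly as in Theorem~\ref{kth nn dist}, building $(1+\epsilon')$-approximate $r$-nets at $\mathrm{poly}(1/\epsilon)$ geometrically spaced radii inside $[a,b]$ and locating, via $|C|$ versus $k$, the scale at which the net cardinality crosses $k$; together with Lemma~\ref{k-center/r-net lemma} this keeps $f(X,k)$ inside an interval whose endpoints are both $\Theta(b)$. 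To output an actual clustering, build one final $(1+\epsilon')$-approximate $\rho$-net for $\rho$ chosen just above $2b$: the contrapositive of part~2 of Lemma~\ref{k-center/r-net lemma} forces $|C|\le k$, so the $\le k$ net points are the centers, and the point‑to‑center assignment is precisely the deletion bookkeeping already performed inside the net construction of Theorem~\ref{hamming r-net}, at no extra cost. The clustering radius is at most $(1+\epsilon')\rho$, while $f(X,k)\ge a=\Omega(b)$; the covering slack $(1+\epsilon')$, the factor $2$ coming from Lemma~\ref{k-center/r-net lemma}, and the constant spread of the framework's interval multiply out to at most $4+\epsilon$ once $\epsilon'$ is a small enough multiple of $\epsilon$.

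The running time is $\mathrm{poly}(1/\epsilon)\cdot\mathrm{polylog}(n)$ invocations of Theorem~\ref{r-net in l1}/\ref{l2 to l1} plus one run of the framework, i.e.\ $\tilde{O}(dn+n^{2-\alpha})$ with $\alpha=\Omega(\epsilon^{1/3}/\log(1/\epsilon))$, and a union bound over the polylogarithmically many net constructions preserves correctness with high probability. I expect the main obstacle to be the decider design and, more precisely, the bookkeeping of all the multiplicative slacks — the $(1+\epsilon')$ covering error of each net, the factor $2$ linking net cardinality to $f(X,k)$, the constant spread of the framework's output interval, and the $(1+\Theta(\epsilon))$ distortions of the $\ell_2\to\ell_1\to$ Hamming reductions — so that their product is exactly $4+\epsilon$ rather than some larger constant, together with the care needed to guarantee that the final net genuinely has at most $k$ points (which is why one builds it at radius safely above $2b$ rather than at the threshold radius itself).
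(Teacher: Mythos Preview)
Your approach is essentially the paper's: build a decider from two approximate $r$-nets via Lemma~\ref{k-center/r-net lemma}, feed it to Net \& Prune, then binary-search the returned interval. One concrete point to adjust: with your radii $\approx r/2$ and $\approx 2r$, the ``otherwise'' case only traps $f(X,k)$ in $[\approx r/4,\ \approx 2r]$ (since $|C|>k$ at radius $r/2$ merely gives $f\ge r/4$), yielding an $(8+\epsilon)$-approximation. The paper instead uses radii $\tfrac{r}{1+\epsilon/16}$ and $2(1+\epsilon/16)r$: the smaller one still certifies $f<r$ when $|C|\le k$, but now the ``otherwise'' interval is $\bigl[\tfrac{r}{2(1+\epsilon/16)},\,2(1+\epsilon/16)^2 r\bigr]$ with spread $4(1+\epsilon/16)^3\le 4+\epsilon$. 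This is precisely the slack bookkeeping you flagged as the main obstacle; the fix is simply to move the smaller radius up from $r/2$ to just below $r$. Your explicit step of reading off an actual clustering from one last net is something the paper's proof leaves implicit.
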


\begin{proof}
In the following we present the decider that we plug into the framework. First we create a $(1+\frac{\epsilon}{32})$-approximate $\frac{r}{1+\frac{\epsilon}{16}}$-net and check if we get $k$ or less center points. If we do, then due to Lemma \ref{k-center/r-net lemma} we can safely return $f(X,k)<r$. If we do not, we create a $(1+\frac{\epsilon}{32})$-approximate $(2(1+\frac{\epsilon}{16})r)$-net and check if we have at most $k$ centers in this net. In that case, due to Lemma \ref{k-center/r-net lemma} we know that $\frac{r}{2(1+\frac{\epsilon}{16})}\leq f(X,k) < 2(1+\frac{\epsilon}{16})^2r$ and we return this interval. Otherwise, we have more than $k$-centers. Thus we know from Lemma \ref{k-center/r-net lemma} that $r(1+\frac{\epsilon}{16})\leq f(X,k)$ and  we return $f(X,k) > r$. We therefor satisfy the properties of a $(1+\epsilon)$-decider and apply it to the framework to compute a constant spread interval containing the exact solution. As in the previous subsections, we slice the interval and do binary search using the decider. If we find an interval $\frac{r}{2(1+\frac{\epsilon}{16})}\leq f(X,k) < 2(1+\frac{\epsilon}{16})^2r$, we return $2(1+\frac{\epsilon}{16})^2r$ which is a $(4+\epsilon)$ approximation, as it might miss the real solution up to a factor of $4(1+\frac{\epsilon}{16})^3=4+\frac{3\epsilon}{4}+\frac{3\epsilon^2}{64}+\frac{\epsilon^3}{1024}\leq 4+\epsilon$. It is easy to see that the decider runs in time $\tilde{O}(dn+n^{2-\alpha})$ and as in the previous subsection, the search takes $O(1/\epsilon^2)$ iterations. Therefor, the runtime is as stated.
\end{proof}

\subsubsection{$(2+\epsilon)$ approximate $k$-center with dependency on the spread}
Another way to approach the approximate $k$-center clustering, is given by \cite{Gon85}. There they construct a $2$-approximate $k$-center clustering by using greedy permutations. A greedy permutation of a point set is an ordering, such that the $i$-th point is the furthest from all previous points in the permutation. In \cite{EHS15} they describe a way of constructing an approximate greedy permutation by building approximate $r$-nets. In the following, we present a way to improve this construction by applying the approximate $r$-net algorithm from the previous section. We then present how to exploit approximate greedy permutations to create $(2+\epsilon)$-approximate $k$-center clusterings. Unfortunately, building the greedy permutation has a logarithmic runtime dependency on the spread, which is the ratio of the biggest to the smallest distance within the point set. Therefore, the algorithm is only useful for data, where the spread is in $poly(n)$.

\paragraph{Approximate greedy permutation}
A greedy permutation is an ordered set $\Pi$ of the input points, such that the point $\pi_i$ is the furthest point in $V$ from the set $\{\pi_j\}_{j=1}^{i-1}$. The following is a formal definition of approximate greedy permutations, as described in \cite{EHS15}.

\begin{definition}
A Permutation $\Pi$ is a \textit{$(1+\epsilon)$-greedy} permutation on $n$ points on metric space $(V,d)$, if there exists a sequence $r_1\geq r_2 \geq ... \geq r_n$ s.t.
\begin{enumerate}
\item The maximum distance of a point in $V$ from $\{\pi_j\}_{j=1}^{i}$ is in the range $[r_i, (1+\epsilon)r_i]$
\item The distance between any two points $u,v\in \{\pi_j\}_{j=1}^{i}$ is at least $r_i$
\end{enumerate}
\end{definition}

We now proof the following Lemma, which helps us build the approximate greedy permutation later.

\begin{lemma}\label{fixed center rnet}
For a set of $n$ points $X\subset\mathbb{R}^d$, the norm $\|\cdot\|$, that denotes the $\ell_1$ or $\ell_2$ norm, a set of points $C\subset X$, such that $\forall x,y\in C, \|x-y\|\geq r$ , some error $\epsilon \gg \frac{\log^6(\log{n})}{\log^3{n}}$ and a radius $r\in\mathbb{R}$, with high probability, we can compute a set $F$, such that $\forall y\in F, c\in C, \|x-y\|\geq r$. We do this in time $\tilde{O}(dn+n^{2-\alpha})$, where $\alpha = \Omega(\frac{\epsilon^{\frac{1}{3}}}{\log(1/\epsilon)})$. 
\end{lemma}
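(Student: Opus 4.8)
The plan is to mirror the proofs of Theorem~\ref{hamming r-net}, of the improved \texttt{DelFar}, and of Lemma~\ref{2r balls}: push the instance into Hamming space with a controlled distortion, run a single distance matrix, and read $F$ off it. As in those proofs I first reduce to the $\ell_1$ case (for an $\ell_2$ instance, pre-compose with the Fast Johnson--Lindenstrauss map used in Theorem~\ref{l2 to l1}). Then I apply the LSH embedding $f$ from the proof of Theorem~\ref{r-net in l1} to obtain $X'=f(X)\subset\{0,1\}^{k}$ with $k=O(\epsilon^{-2}\log n)$, its subset $C'=f(C)$, a Hamming radius $A_0$, and a gap $\gamma:=A_1-A_0=\Omega(\epsilon k)>0$ such that, with high probability, for all $p,q\in X$ one has $\|p-q\|\le r\Rightarrow\|f(p)-f(q)\|_1\le A_0$ and $\|p-q\|\ge(1+\epsilon)r\Rightarrow\|f(p)-f(q)\|_1\ge A_1$. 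I fix once and for all the threshold $A_0+\gamma/2$, which lies strictly between $A_0$ and $A_1$; this single choice will have to absorb every later approximation.

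Next I build one distance matrix. I apply the construction of Theorem~\ref{distance matrix} with its \emph{rows} indexing a partition $S_1,\dots,S_{n^{1-\alpha}}$ of the centers $C'$ and its \emph{columns} indexing the points of $X'$ (the same cross use of rows and columns as in Lemma~\ref{2r balls}), with radius $A_0$ and additive error $\gamma/2$. This yields, with high probability, a matrix $W$ with: if $\min_{c\in S_i}\|x_j-c\|_1\le A_0$ then $W_{i,j}>2|S_i|$, and if $\min_{c\in S_i}\|x_j-c\|_1>A_0+\gamma/2$ then $|W_{i,j}|\le|S_i|$. I then set $F$ to be the set of original points whose column $j$ in $W$ has \emph{no} ``hot'' entry, i.e.\ every $W_{i,j}\le 2|S_i|$; this is one pass over $W$. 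In contrast to Theorem~\ref{hamming r-net} and Lemma~\ref{2r balls}, I never have to recover \emph{which} center is close to a given point, so there is no per-hot-entry brute-force distance check, and hence no need for the sparsification step of Theorem~\ref{brute force}.

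Correctness is a chase of contrapositives. If $y\in F$ is the original of $x_j$, then for every $i$ the inequality $W_{i,j}\le 2|S_i|$ forces $\min_{c\in S_i}\|x_j-c\|_1>A_0$; since the $S_i$ partition $C'$, this gives $\|f(y)-f(c)\|_1>A_0$ for every $c\in C$, and the reduction then yields $\|y-c\|>r\ge r$ for every $c\in C$, which is the asserted property. Moreover $F$ also satisfies the covering side one needs downstream: if $y\in X$ has $\|y-c\|>(1+\epsilon)r$ for all $c\in C$, then $\|f(y)-f(c)\|_1\ge A_1>A_0+\gamma/2$ for all $c$, so by the second matrix property every entry of $f(y)$'s column satisfies $|W_{i,j}|\le|S_i|\le 2|S_i|$ and hence $y\in F$; thus $F$ contains every point genuinely far from all of $C$, which is exactly the \texttt{DelFar}-style guarantee the approximate greedy-permutation construction will use. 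All of this holds with high probability by a union bound over the finitely many randomized sub-routines (the embedding and the distance-matrix construction, each repeated $O(1)$ times on failure).

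For the running time, the embedding costs $O(dkn)=\tilde O(dn)$ (the $\ell_2$ case additionally applies the Fast Johnson--Lindenstrauss map, as in Theorem~\ref{l2 to l1}, which is again $\tilde O(dn)$), Theorem~\ref{distance matrix} costs $\tilde O(n^{2-\alpha})$, and the single pass over $W$ costs another $\tilde O(n^{2-\alpha})$, for a total of $\tilde O(dn+n^{2-\alpha})$ with $\alpha=\Omega(\epsilon^{1/3}/\log(1/\epsilon))$ exactly as in Theorem~\ref{r-net in l1}. I expect the only delicate point to be the approximation bookkeeping: three independent sources of slack --- the LSH embedding, the additive PTF error inside the distance matrix, and the definition of $F$ --- must all be reconciled through the one threshold $A_0+\gamma/2\in(A_0,A_1)$, so that points within $r$ of some center are provably kept out of $F$ while points beyond $(1+\epsilon)r$ of all centers are provably put in. Getting the chain of contrapositives to line up, and checking that it is undamaged by the $\ell_2\to\ell_1$ hop, is where the care goes; everything else is a direct reuse of the earlier statements.
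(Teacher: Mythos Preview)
Your proof is correct, but it takes a genuinely different route from the paper's own argument.

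The paper's proof mirrors the full approximate $r$-net construction of Theorem~\ref{hamming r-net}: after reducing to Hamming space it runs a modified sparsification step (Theorem~\ref{brute force}, but sampling from $C$ rather than from $X$), then builds the distance matrix with \emph{columns indexed by the remaining centers} and \emph{rows by a partition of the remaining points of $X'$}, and finally, for every ``hot'' entry, brute-force checks the $|S_i|$ distances to delete covered points from $X$. The sparsification is there precisely so that the number of hot entries is at most $n^{1.7}$ with high probability, keeping the per-entry brute force affordable.

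You transpose the roles: rows partition $C'$, columns range over all of $X'$, and a point enters $F$ iff its column is entirely ``cold''. The key observation you exploit---and the paper does not---is that the task only asks a yes/no question (``is $y$ far from every center?''), not \emph{which} center is close. This is exactly what a column of the distance matrix already answers, so you never need to recover the offending center, never need the per-hot-entry $O(|S_i|)$ scan, and therefore never need the preliminary sparsification. The chain of contrapositives you set up (with the threshold $A_0+\gamma/2$ absorbing both the embedding slack and the PTF error) is sound and gives both the stated packing-style guarantee and the complementary covering guarantee that the greedy-permutation application actually needs. The running time analysis goes through because Theorem~\ref{distance matrix} uses nothing about the partitioned set and the query set being the same; partitioning $C'$ into groups of size $n^{\alpha}$ gives at most $n^{1-\alpha}$ rows and the rectangular multiplication bound of Lemma~\ref{matmult copper} applies unchanged.

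In short: both arguments are correct, but yours is the cleaner one for this particular lemma, trading the paper's ``sparsify, then resolve hot entries exactly'' paradigm for a single matrix lookup per point.
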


\begin{proof}
We proceed similar as when building the approximate $r$-net. We first reduce the problem to Hamming space with additive error $\epsilon$ as in Theorem \ref{r-net in l1}  for points in $\ell_1$ or Theorem \ref{l2 to l1} for $\ell_2$ points. We then arrive at the mapped point set $X'$ and radius $r'$. Next, we apply a slightly modified version of Theorem \ref{brute force}. Instead of randomly choosing a point from $X'$, we randomly choose a point $c\in C$. We then iterate over every point in $x\in X'$ and check if $\|x-c\|\leq r'+\epsilon k$ for $k=(\epsilon^{-2}\log{n})$, the dimension in Hamming space. For every point $x$ where this holds, we delete $x$ from $X'$ as well as from the original set $X$. We do this $\sqrt[]{n}$ times, which takes $\tilde{O}(n^{1.5})$ time in total. We now assume, without loss of generality, that $|C|>\sqrt[]{n}$, as otherwise we would be done at this point. By applying a similar argument as in Theorem \ref{brute force}, it holds that with probability at least $1-n^{-0.2}$, $|\{(c,x)|c\in C, x\in X, \|x-c\|\leq r'+\epsilon k\}|\leq n^{1.7}$. As in Theorem \ref{hamming r-net}, we now build the distance matrix $W$ of the remaining points in $X'$. We then iterate over every column corresponding to the remaining center points $c_j$ and, for every entry $W_{i,j} > 2|S_i|$, delete the original version of every point $x\in S_i$ such that $\|x-c\|\leq r' +k\epsilon$ from $X$. This takes time $\tilde{O}(n^{2-\alpha}+n^{1.7})$. The point set $X$ then, by construction, only contains points which are further then $r$ from any point in $C$.
\end{proof}

The algorithm for building the greedy permutation is very similar to the algorithm presented in \cite{EHS15}. We build sequences of approximate $r$-nets, starting with a radius that is an approximation of the maximum distance within the point set. Then, we  consecutively build approximate $r$-nets for smaller and smaller $r$, while keeping centers of previously computed approximate $r$-nets. Putting the center points into a list in the order they get computed results in an approximate greedy permutation. For a full proof outline refer to the supplementary material.

\begin{restatable}[]{theorem}{greedypermut}\label{greedy permut}
For point set $X\subset\mathbb{R}^d$ with $|X| = n$, $\epsilon \gg \frac{4\log^6(\log{n})}{\log^3{n}}$ and the norm $\|\cdot\|$, that denotes the $\ell_1$ or $\ell_2$ norm, with high probability, we can compute a $(1+\epsilon)$-approximate greedy computation in time $\tilde{O}((dn + n^{2-\alpha})\log{\Phi})$, where $\alpha = \Omega(\frac{\epsilon^{\frac{1}{3}}}{\log(1/\epsilon)})$ and $\Phi = \frac{\max_{x,y\in X, x\neq y}\|x-y\|}{\min_{x,y\in X, x\neq y}\|x-y\|}$ is the spread of data $X$.
\end{restatable}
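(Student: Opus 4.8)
The plan is to follow the scale-sweeping strategy of \cite{EHS15}, using Theorem~\ref{r-net in l1} as the net-building subroutine and Lemma~\ref{fixed center rnet} for the step that refines a net to a finer scale while freezing the coarser centers. For the $\ell_2$ norm we would first apply the $\ell_2\to\ell_1$ reduction of Theorem~\ref{l2 to l1} once, at cost $\tilde O(dn)$, obtaining $\ell_1$ points in dimension $\tilde O(1)$ with all pairwise distances preserved up to a $(1+\epsilon)$ factor, and then run the $\ell_1$ algorithm on them, so that $d$ survives only in this single step. Fix $\epsilon'=\Theta(\epsilon)$ with $(1+\epsilon')^2\le 1+\epsilon$. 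The algorithm picks an arbitrary $\pi_1\in X$, computes $r^{(0)}:=\max_{v\in X}\|\pi_1-v\|$ exactly in $\tilde O(dn)$ time, and sets $C_0:=\{\pi_1\}$, $\Pi:=(\pi_1)$. For $j=1,2,\dots$ it sets $r^{(j)}:=r^{(j-1)}/(1+\epsilon')$ and runs a scale-$j$ step: (a) apply Lemma~\ref{fixed center rnet} with the frozen set $C_{j-1}$ (which is $r^{(j-1)}$-separated, hence $r^{(j)}$-separated, as that lemma requires) and radius $r^{(j)}$, discarding from $X$ every point within $r^{(j)}$ of $C_{j-1}$ and obtaining $X_j$; (b) build a $(1+\epsilon')$-approximate $r^{(j)}$-net of $X_j$ via Theorem~\ref{r-net in l1}, obtaining new centers $C'_j$; (c) append the points of $C'_j$ to $\Pi$ in any order and set $C_j:=C_{j-1}\cup C'_j$. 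It stops once $C_j=X$.

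\textbf{Running time.} Each scale step costs $\tilde O(dn+n^{2-\alpha})$, so it remains to bound the number of scales by $\tilde O(\log\Phi)$. Once $(1+\epsilon')r^{(j)}$ drops below the minimum interpoint distance $\delta$, the set $X_j$ equals $X\setminus C_{j-1}$, all of its pairwise distances exceed $(1+\epsilon')r^{(j)}$, so its $(1+\epsilon')$-approximate $r^{(j)}$-net is $X_j$ itself and the algorithm terminates. Hence the final scale $J$ satisfies $(1+\epsilon')^{J-2}\le r^{(0)}/\delta\le \mathrm{diam}(X)/\delta=\Phi$, i.e. $J=O((1/\epsilon')\log\Phi)=\tilde O(\log\Phi)$. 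Each scale step uses randomized subroutines that each succeed with high probability, so a union bound over the $\tilde O(\log\Phi)$ scales yields overall success with high probability, and the total running time is $\tilde O((dn+n^{2-\alpha})\log\Phi)$.

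\textbf{Correctness.} To certify that $\Pi$ is $(1+\epsilon)$-greedy we would exhibit the witnessing sequence $r_i:=M_i/(1+\epsilon')^2$, where $M_i:=\max_{v\in X}\min_{1\le j\le i}\|v-\pi_j\|$. Since $M_i$ is non-increasing, so is $r_i$, and $M_i\in[r_i,(1+\epsilon')^2 r_i]\subseteq[r_i,(1+\epsilon)r_i]$, which gives property~1 (the case $i=1$ being immediate). Property~2 carries the content: if $\pi_i$ was added in the scale-$j$ step, then any two points of $\{\pi_1,\dots,\pi_i\}$ are at distance at least $r^{(j)}$ --- two points from earlier (coarser) scales by their own packing radii, two new points by the packing guarantee of the $r^{(j)}$-net, and one old with one new because Lemma~\ref{fixed center rnet} removed everything within $r^{(j)}$ of $C_{j-1}$; moreover $M_i\le\max_{v\in X}\min_{c\in C_{j-1}}\|v-c\|\le(1+\epsilon')r^{(j-1)}=(1+\epsilon')^2 r^{(j)}$, since $\{\pi_1,\dots,\pi_i\}\supseteq C_{j-1}$ and after the scale-$(j{-}1)$ step every point of $X$ lies within $(1+\epsilon')r^{(j-1)}$ of $C_{j-1}$ (points removed by Lemma~\ref{fixed center rnet} are within $r^{(j-1)}$ of $C_{j-2}\subseteq C_{j-1}$, the rest are covered by the net of $X_{j-1}$). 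Combining, $r_i=M_i/(1+\epsilon')^2\le r^{(j)}\le\min_{1\le a<b\le i}\|\pi_a-\pi_b\|$, which is property~2. Scale steps adding no new center are skipped, so the indexing and the chain $r_1\ge r_2\ge\cdots$ are well defined.

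\textbf{Main obstacle.} The crux will be the bookkeeping that forces one sequence $r_i$ to play both roles at once --- a lower bound on every prefix-$i$ pairwise distance and a $(1+\epsilon)$-approximation of $M_i$ --- which is precisely why consecutive scales must differ by a factor of exactly $1+\epsilon'$ and why the covering and packing slacks must not compound beyond $(1+\epsilon')^2$; landing the overall error at $1+\epsilon$ rather than a larger constant is the delicate part. A secondary technical point is that Theorem~\ref{r-net in l1} and Lemma~\ref{fixed center rnet} guarantee their distance thresholds only up to LSH (and, for $\ell_2$, Johnson--Lindenstrauss) distortion, so one must verify that ``packing $\ge r^{(j)}$'' and ``covering within $(1+\epsilon')r^{(j)}$'' both hold at every scale with high probability, which is exactly the union bound invoked above.
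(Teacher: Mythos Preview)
Your proposal is correct and follows essentially the same approach as the paper: pick a starting point, compute its farthest distance to set the top scale, sweep a geometric sequence of radii shrinking by a $(1+\Theta(\epsilon))$ factor, and at each scale use Lemma~\ref{fixed center rnet} to freeze the existing centers before extending them with a fresh approximate net, terminating after $\tilde O(\log\Phi)$ scales. The only differences are cosmetic: the paper uses $\epsilon/4$ where you use $\epsilon'$, defers the correctness bookkeeping (your witnessing sequence $r_i=M_i/(1+\epsilon')^2$) entirely to \cite{EHS15}, and invokes Theorem~\ref{l2 to l1} at each scale rather than doing the $\ell_2\to\ell_1$ embedding once up front as you suggest.
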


\paragraph{$k$-center with approximate greedy permutation}
%\nb{Yuyi, should we have a theorem or just text here? Y: Theorem would be better, zeta:not enough space!}
%\nb{maybe not have paragraphs at all? roger: we seem to have one page too much? is this question because of that? if so, then I'd rather put some more theorems in the appendix, and if we don't have an appendix, say somewhere early that due to space constraints, some theorems have been omitted, they can be found at some-anonymous-link.}
In \cite{Gon85}, Gonzales proved that an exact greedy permutation leads to a 2 approximation of the solution for the $k$-center objective, if we take the first $k$ elements out of the permutation and declare them as cluster centers. The maximum radius of a cluster, is then the minimum distance of the $k+1$-th point in the permutation to the one of the first $k$ points. With a $(1+\epsilon)$-approximate greedy permutation we can then derive a $(2+\epsilon)$-approximate solution for the $k$-center problem, since for every element $\pi_i$ and every $r_i$ as in the definition of the approximate greedy permutation, we know that, in metric space $(V,\|\cdot\|)$, it holds that $r_i\leq\max_{u\in V}\min_{j\in\{1,...,i\}}\|\pi_j-u\|\leq(1+\epsilon)r_i$. Thus the radius used, if we take the first $k$ elements of the approximate greedy permutation as cluster centers, is at most a $1+\epsilon$ factor larger than the radius we would use by taking the first $k$ elements of the exact greedy permutation, which in turn is at most a $2$ factor larger than the exact $k$-center clustering radius.
\newline

\section{Conclusion \& Future Work}
Our work has lead to interesting improvement on the construction time of approximate $r$-nets and applications thereof. We wish to highlight the following open problems.
First, can we find a lower bound to the construction time of $r$-nets? This would also tell us more about the limits of the Net \& Prune framework.
Second, can we get rid of the spread dependency on the approximate greedy permutation algorithm, as this would make the algorithm suitable for much more general data sets? Our work seems to suggest that this is tightly coupled to finding all nearest neighbors. 
\section{Acknowledgments}
We thank Ioannis Psarros for the helpful discussions. Yuyi Wang is partially supported by X-Order Lab. 

%\newpage
\bibliography{references}

\begin{thebibliography}{}

\bibitem[\protect\citeauthoryear{Alman, Chan, and
  Williams}{2016}]{alman2016polynomial}
Alman, J.; Chan, T.~M.; and Williams, R.
\newblock 2016.
\newblock Polynomial representations of threshold functions and algorithmic
  applications.
\newblock In {\em 2016 IEEE 57th Annual Symposium on Foundations of Computer
  Science (FOCS)},  467--476.
\newblock IEEE.

\bibitem[\protect\citeauthoryear{Avarikioti \bgroup et al\mbox.\egroup
  }{2017}]{AEKP}
Avarikioti, G.; Emiris, I.~Z.; Kavouras, L.; and Psarros, I.
\newblock 2017.
\newblock High-dimensional approximate r-nets.
\newblock In {\em Proceedings of the Twenty-Eighth Annual ACM-SIAM Symposium on
  Discrete Algorithms},  16--30.
\newblock SIAM.

\bibitem[\protect\citeauthoryear{Coppersmith}{1982}]{Cop}
Coppersmith, D.
\newblock 1982.
\newblock Rapid multiplication of rectangular matrices.
\newblock {\em SIAM J. Comput.} 11:467--471.

\bibitem[\protect\citeauthoryear{Dasgupta}{2002}]{Dasgupta_2002_Hierarchical_clustering}
Dasgupta, S.
\newblock 2002.
\newblock Performance guarantees for hierarchical clustering.
\newblock In Kivinen, J., and Sloan, R.~H., eds., {\em Computational Learning
  Theory},  351--363.
\newblock Berlin, Heidelberg: Springer Berlin Heidelberg.

\bibitem[\protect\citeauthoryear{Defays}{1977}]{defays1977efficient}
Defays, D.
\newblock 1977.
\newblock An efficient algorithm for a complete link method.
\newblock {\em The Computer Journal} 20(4):364--366.

\bibitem[\protect\citeauthoryear{Eppstein, Har{-}Peled, and
  Sidiropoulos}{2015}]{EHS15}
Eppstein, D.; Har{-}Peled, S.; and Sidiropoulos, A.
\newblock 2015.
\newblock Approximate greedy clustering and distance selection for graph
  metrics.
\newblock {\em CoRR} abs/1507.01555.

\bibitem[\protect\citeauthoryear{Estivill-Castro}{2002}]{estivill2002so}
Estivill-Castro, V.
\newblock 2002.
\newblock Why so many clustering algorithms: a position paper.
\newblock {\em ACM SIGKDD explorations newsletter} 4(1):65--75.

\bibitem[\protect\citeauthoryear{Fern and
  Brodley}{2003}]{fern_2003_random_projections}
Fern, X.~Z., and Brodley, C.~E.
\newblock 2003.
\newblock Random projection for high dimensional data clustering: A cluster
  ensemble approach.
\newblock In {\em Proceedings of the 20th international conference on machine
  learning (ICML-03)},  186--193.

\bibitem[\protect\citeauthoryear{Gonzalez}{1985}]{Gon85}
Gonzalez, T.~F.
\newblock 1985.
\newblock Clustering to minimize the maximum intercluster distance.
\newblock {\em Theoretical Computer Science} 38:293 -- 306.

\bibitem[\protect\citeauthoryear{Har-Peled and Raichel}{2015}]{PR14}
Har-Peled, S., and Raichel, B.
\newblock 2015.
\newblock Net and prune: A linear time algorithm for euclidean distance
  problems.
\newblock {\em Journal of the ACM (JACM)} 62(6):44.

\bibitem[\protect\citeauthoryear{Hsu and Nemhauser}{1979}]{HSU}
Hsu, W.-L., and Nemhauser, G.~L.
\newblock 1979.
\newblock Easy and hard bottleneck location problems.
\newblock {\em Discrete Applied Mathematics} 1(3):209 -- 215.

\bibitem[\protect\citeauthoryear{Johnson and
  Lindenstrauss}{1984}]{johnson1984extensions}
Johnson, W.~B., and Lindenstrauss, J.
\newblock 1984.
\newblock Extensions of lipschitz mappings into a hilbert space.
\newblock {\em Contemporary mathematics} 26(189-206):1.

\bibitem[\protect\citeauthoryear{Johnson and
  Schechtman}{1982}]{johnson1982embeddingl}
Johnson, W.~B., and Schechtman, G.
\newblock 1982.
\newblock Embedding $l_p^m$ into $l_1^n$.
\newblock {\em Acta Mathematica} 149(1):71--85.

\bibitem[\protect\citeauthoryear{Kriegel \bgroup et al\mbox.\egroup
  }{2011}]{kriegel2011density}
Kriegel, H.-P.; Kr{\"o}ger, P.; Sander, J.; and Zimek, A.
\newblock 2011.
\newblock Density-based clustering.
\newblock {\em Wiley Interdisciplinary Reviews: Data Mining and Knowledge
  Discovery} 1(3):231--240.

\bibitem[\protect\citeauthoryear{Lloyd}{1982}]{lloyd1982least}
Lloyd, S.
\newblock 1982.
\newblock Least squares quantization in pcm.
\newblock {\em IEEE transactions on information theory} 28(2):129--137.

\bibitem[\protect\citeauthoryear{Matou\v{s}ek}{2008}]{MAT}
Matou\v{s}ek, J.
\newblock 2008.
\newblock On variants of the johnson-lindenstrauss lemma.
\newblock {\em Random Struct. Algorithms} 33(2):142--156.

\bibitem[\protect\citeauthoryear{Sibson}{1973}]{sibson1973slink}
Sibson, R.
\newblock 1973.
\newblock Slink: an optimally efficient algorithm for the single-link cluster
  method.
\newblock {\em The computer journal} 16(1):30--34.

\bibitem[\protect\citeauthoryear{Williams}{2014}]{Wil14}
Williams, R.
\newblock 2014.
\newblock New algorithms and lower bounds for circuits with linear threshold
  gates.
\newblock In {\em Proceedings of the forty-sixth annual ACM symposium on Theory
  of computing},  194--202.
\newblock ACM.

\end{thebibliography}
\bibliographystyle{aaai}
\clearpage

% \begin{comment}
%\nb{roger: appendix allowed?no it need to be supplementary material.So probably different tex file? the other solution is to comment it out and then submit in the supplementary material the entire file}
\appendix
\section{Proof of Theorem \ref{brute force}}
\bruteforce*
\begin{proof}
We create a copy of $X$ and call it $X'$. After that, we repeat the following $\sqrt[]{n}$ times: Choose a point $x_i \in X'$ uniformly at random, delete it from $X'$ and add it to $C$. Then check for each $x\in X'$, if $\|x-x_i\|_1 \leq r+\epsilon d$. If so, delete $x$ from $X'$ as well. We do this in $O(dn^{1.5})$ time.\\
Let $Y:=|\{\{i,j\}|x_i,x_j\in X',\|x_i-x_j\|_1\leq r+\epsilon d\}|$. We now prove $Pr[Y\leq n^{1.7}]\geq 1-n^{-0.2}$ by doing a case distinction. Let $A_i$ be the number of points with small distance to a randomly chosen point $p$ in the $i$-th iteration. 
Now first assume that $\mathbb{E}[A_i]>2n^{0.5}; \forall i\in\{1,...,\sqrt[]{n}\}$. Thus the number of points to be deleted in each iteration is at least $2n^{0.5}+1$ in expectation which results in more then $n$ deleted points after $\sqrt[]{n}$ iterations. Therefor, if our assumption holds, we get $|X'|=0$ after at most $\sqrt[]{n}$ iterations. $Pr[Y\leq n^{1.7}] \geq 1-n^{-0.2}$ then holds as $Y=0$.\\
Next assume that $\exists i\in\{1,...,\sqrt[]{n}\}, \mathbb{E}[X_i]\leq 2n^{0.5}$. If we reach such an $i$, we have at most $2n^{0.5}$ small distances left between a random point of $X'$ and all the points within $X'$ in expectation. After all iterations, the number of "small" distances is therefor no more then $2n^{1.5}$ in expectation. Thus, by Markov's inequality:
\begin{equation}
\begin{split}
Pr[Y\leq n^{1.7}] &= 1- Pr[Y\geq n^{1.7}]\geq 1-O(n^{-0.2}).
\end{split}
\end{equation}
We now prove correctness of the partial approximate r-net $C$. For every point $p\in X\setminus X'$ there will be a net point in $C$ that has distance at most $r+\epsilon d$ from $p$, as we only removed points from X' which satisfy this property. For every two points $p,q\in C$ we have $\|p-q\|_1>r$, because if the distance was less or equal to $r$, either $p$ would have been deleted in the iteration of $q$ or vice versa (whatever point came first). This concludes the proof.
\end{proof}

\section{Proof of Theorem \ref{distance matrix}}

\distancematrix*

\begin{proof}
We construct an algorithm that is similar to the one used for nearest/furthest neighbor search in Hamming space, presented in \cite{alman2016polynomial}. We first create a random partition of  $X$ into disjoint sets $S_1,...,S_{n^{1-\alpha}}$, each of size $s := n^\alpha$. For every such $S_i$ and every point $q \in X$ we then want to test, if at least one point is within $(r+\epsilon d)$ distance of $q$ or not. This can be expressed as a Boolean formula in the following way:\\
\begin{align*}
&F(S_i, q) := [\min\limits_{p\in S_i}\|p-q\|_1 \leq r + \epsilon d] \\ &= \bigvee \limits_{p\in S_i}[\sum_{j=1}^d(p_jq_j + (1-p_j)(1-q_j))\geq d - (r + \epsilon d)]\\
&=\bigvee \limits_{p\in S_i}[\sum_{j=1}^d (p_j-0.5)(2q_j-1) \geq d-(r + \epsilon d)+0.5]\\
\end{align*}
Applying Theorem \ref{prob. PTF for OR}, we construct a probabilistic PTF to express $F(S_i,q)$. We then give a bound on the maximum number of monomials, according to Theorem \ref{prob. PTF for OR}:
\begin{align*}
&s\cdot\binom{O(d)}{O((\frac{1}{\epsilon})^{\frac{1}{3}}\log(s))} \leq n^\alpha \cdot O(\frac{d}{(\frac{\alpha}{\epsilon^{\frac{1}{3}}})\log{n}})^{O(\frac{\alpha}{\epsilon^{1/3}}\log{n})}\\
& \leq n^\alpha \cdot n^{O((\frac{\alpha}{\epsilon^{1/3}})\log(\frac{d}{\alpha\log{n}}))} \ll (n^{1-\alpha})^{0.1}
\end{align*}
As stated in \cite{alman2016polynomial}, this bound also holds for the construction time of the polynomial.\\
Next we sample a polynomial $f$ from the probabilistic PTF for $F(S_i, q)$. As presented by \cite{alman2016polynomial}, we are able to do this in $O(n\log(d)\log(nd))$ time. We then split $f$ into two vectors $\phi(S_i)$ and $\psi(q)$ of $(n^{1-\alpha})^{0.1}$ dimensions over $\mathbb{R}$ s.t. their dot product results in the evaluation of the corresponding polynomial. We are able to do this as the polynomial $P(x_{11},...x_{|S_i|d})$ has parameters of the form $x_{ij}=(p_j-0.5)(2q_j-1)$.
This reduces the problem of evaluating $n^{2-\alpha}$ many polynomials to multiplying a matrix $A := \frac{n}{s}\times(\frac{n}{s})^{0.1}$, where the $i$-th row of $A$ consists of $\phi(S_i)^T$, with a matrix $B := (\frac{n}{s})^{0.1}\times n$, where the $i$-th column of $B$ consists of $\psi(x_i)$. We further reduce the multiplication, by splitting $B$ into $s$ matrices of size $(\frac{n}{s})^{0.1}\times\frac{n}{s}$. By Lemma \ref{matmult copper} we are able to do each of these multiplications in $\tilde{O}((\frac{n}{s})^2)$ arithmetic operations over an appropriate field. The total time of the multiplications is then $\tilde{O}(\frac{n^2}{s}) = \tilde{O}(n^{2-\alpha})$ as we do $s$ matrix multiplications.

We then reassemble each of the $s$ matrices by placing them next to  each other, such that the $j$-th column corresponds to the point $q_j\in X$. This leads to the matrix $W$ where

\begin{enumerate}
\item $W_{ij} > 2|S_i|$ if
\begin{align*}
&\bigvee \limits_{p\in S_i}[\sum_{j=1}^d (p_j-0.5)(2q_j-1) \geq d-(r + \epsilon d) + 0.5 + \epsilon d] \\
&= [\min\limits_{p\in S_i}\|p-q\|_1 \leq r]
\end{align*}
\item $|W_{ij}| \leq |S_i|$ if 
\begin{align*}
&\bigwedge \limits_{p\in S_i}[\sum_{j=1}^d (p_j-0.5)(2q_j-1) < d-(r + \epsilon d)+0.5] \\
&=[\min\limits_{p\in S_i}\|p-q\|_1 > r+\epsilon d]
\end{align*}
\end{enumerate}
By Theorem \ref{prob. PTF for OR}, the error probability of each entry is $\frac{1}{3}$ which can be lowered to $\frac{1}{n^3}$ by repeating $O(\log{n})$ times and taking majorities. The overall runtime is then $\tilde{O}(n^{2-\alpha})$.
\end{proof}

\section{Proof of Theorem \ref{l2 to l1}}
\elltwo*
\begin{proof}
We define a mapping from $\ell_2$ to $\ell_1$. Every $x\in X$ gets mapped to the vector $f(x) = (f_1(x),...,f_k(x))$ where $k =(\epsilon^{-2} \log{n})$ and $f_i(x) = \sum_{j = 1}^{d}\sigma_{ij}x_j$. The coefficients $\sigma_{ij}$'s are independent normally distributed random variables with mean 0 and variance 1. As presented in \cite{MAT}, it holds that for any two points $x,y\in X$, \\
$(1-\epsilon)\|x-y\|_2\leq C\|f(x)-f(y)\|_1\leq(1+\epsilon)\|x-y\|_2$ with probability $1-O(\frac{1}{n})$ for some constant $C$. The cost of applying the mapping is $O(dkn)$. We then employ Theorem \ref{r-net in l1} on the new set of points, to get a $(1+\epsilon)$-approximate $r$-net in the time stated.
\end{proof}

\section{Proof of Theorem \ref{kth nn dist}}
\kthdist*
\begin{proof}
First we describe the decider, which is basically the same as in \cite{AEKP}, except that we plug in the new algorithm for {\fontfamily{cmtt}\selectfont DelFar}. The decider then works as follows: We first call {\fontfamily{cmtt}\selectfont DelFar} on the set $X$ with radius $r/(1+\frac{\epsilon}{4})$ and error $\epsilon/4$ to get a set $W_1$. Then we call {\fontfamily{cmtt}\selectfont DelFar} on $X$ again but this time with radius $r$ and error $\epsilon /4$ to get another set $W_2$. If it then holds that $|W_1| \geq k$, we know that when drawing balls of at most radius $r$ around each point, at least $k$ of the points have their nearest neighbor within their ball. This means, that $r$ is to big and we output $f(X, k) < r$. Similar, if $|W_2| < k$, we know that even if we draw balls around all the points with at least radius $r$, not even $k$ points have their nearest neighbor inside their ball which implies that $r$ is to small and we output $f(X,k)>r$. Finally, if we have that $|W_1| < k$ and $|W_2|\geq k$, we know that the exact $k$-nearest neighbor has to be in the range $[r/(1+\frac{\epsilon}{4}),(1+\frac{\epsilon}{4})r]$ and we output that interval.\\
As this satisfies the definition of a $(1+\epsilon)$-decider, we plug it into the framework and get a constant spread interval $[x,y]$ which contains the exact solution. We then use the decider again to $(1+\epsilon)$-approximate the exact solution. We slice the interval into pieces $x,(1+\epsilon)x,(1+\epsilon)^2x,...,y$ and do binary search on those slices, by applying the decider. If we hit an $r$ where the decider gives us an interval $[r/(1+\frac{\epsilon}{4}),(1+\frac{\epsilon}{4})r]$, we return $(1+\frac{\epsilon}{4})r$ and are done. The optimal solution might then be $r/(1+\frac{\epsilon}{4})$, which is a $(1+\frac{\epsilon}{4})^2$ factor smaller then what we return. This is fine as $(1+\frac{\epsilon}{4})^2 = 1+\epsilon/2 + \epsilon^2/16\leq 1 + \epsilon$ and what we return is thus a $(1+\epsilon)$-approximate as desired. While searching we make $O(1/\log(1+\epsilon)) = O(1/\epsilon^2)$ calls to the decider. The search thus ends up having a runtime of $\tilde{O}(dn + n^{2-\alpha})$.
\end{proof}

\section{Proof of Lemma \ref{k-center/r-net lemma}}
\kcenterlemma*
\begin{proof}
For the first property we create a $(1+\epsilon)$-approximate $r$-net of $X$. Due to the covering property of approximate $r$-nets, every point in $X$ is within $(1+\epsilon)r$ of a center point and thus $(1+2\epsilon)r$ is not an optimal radius for the $k$-center clustering.

For the second property note that an approximate $r$-net with more then $k$ centers contains at least $k+1$ centers. These are at least $r$ from each other due to the packing property. Thus $k$ centers with a radius of $<r/2$ would not be able to cover all of the $k+1$ centers from the approximate $r$-net and hence $r \leq 2f(X)$.
\end{proof}

\section{Proof of Theorem \ref{greedy permut}}
\greedypermut*
\begin{proof}
In \cite{EHS15} they presented a way to compute a $(1+\epsilon)$-approximate greedy permutation. In the following, when talking about building an approximate $r$-net we refer to Theorem \ref{r-net in l1} for Euclidean points with $\ell_1$ metric or Theorem \ref{l2 to l1} for points in $\ell_2$ metric space. 

We choose a point $p$ at random and search for the furthest neighbor of that point. $\Delta$, which is the distance to the furthest neighbor, is then a $2$-approximation to the maximum distance within $X$ by the triangle inequality. We thus know that $\max_{x,y\in X, x\neq y}\|x-y\| \geq \Delta\geq\frac{1}{2} \max_{x,y\in X, x\neq y}\|x-y\|$. Next we define a sequence of radiuses $r_i = \frac{\Delta}{(1+\frac{\epsilon}{4})^{i-1}}$ for $i \in \{1,..,M:=\left \lceil\log_{1+\frac{\epsilon}{4}}{\Phi}\right\rceil+2\}$ where $\Phi := \frac{\max_{x,y\in X, x\neq y}\|x-y\|}{\min_{x,y\in X, x\neq y}\|x-y\|}$ is the spread of the data. Note that $r_M \leq \frac{\min_{x,y\in X, x\neq y}\|x-y\|}{(1+\epsilon)}$. We then iterate over this sequence, where in the first iteration, we compute an approximate $r_1$ net $C_1$ of $X$. We do not need to run the algorithm as we know that $\{p\}=C_1$ as $r_1 =\Delta$ and a ball of radius $\Delta$ around $p$ covers $X$ by construction. In every iteration $i>1$, we then first define the set $S_i = \bigcup_{j=1}^{i-1}C_j$. Next, we alter the approximate $r$-net algorithm in such a way, that the points in $S_i$ are already centers of the net and thus points within distance $r_i$ of $S_i$ are not added as net points. By Lemma \ref{fixed center rnet} we are able to do this, without an increase of the runtime. We then apply Lemma \ref{fixed center rnet} to $X$ with error $\epsilon/4$, the set $S_i$ and radius $r_i$. After that we compute a $(1+\frac{\epsilon}{4})$-approximate $r$-net of the set $F$ that gets returned. The net points are then stored in the set $C_i$. The sequence $\langle C_1,...,C_M\rangle$ then forms a $(1+\epsilon)$-greedy permutation, as shown in \cite{EHS15}. It is sufficient to do $M$ iterations as we know that a $(1+\frac{\epsilon}{4})$-approximate $r_M$-net adds all the remaining points to $C_M$ as $r_M$ is less then the minimum distance of the set.
The number of iterations is $M$ and $O(\log_{1+\frac{\epsilon}{4}}{\Phi}) = O(\frac{4}{\epsilon}\log{\Phi}) = \tilde{O}(\log{\Phi})$. In each iteration we apply Lemma \ref{fixed center rnet} and compute an approximate $r$-net, the total runtime is thus as stated.
\end{proof}

% \end{comment}

\end{document}